\def\eqref#1{equation~\ref{#1}}
\def\1{\bm{1}}
\def\ve{{\bm{e}}}
\def\vm{{\bm{m}}}
\def\vp{{\bm{p}}}
\def\vx{{\bm{x}}}
\def\mE{{\bm{E}}}
\def\mI{{\bm{I}}}
\def\mM{{\bm{M}}}
\def\mQ{{\bm{Q}}}
\def\mX{{\bm{X}}}
\DeclareMathAlphabet{\mathsfit}{\encodingdefault}{\sfdefault}{m}{sl}
\SetMathAlphabet{\mathsfit}{bold}{\encodingdefault}{\sfdefault}{bx}{n}
\newcommand{\R}{\mathbb{R}}
\newtheorem{theorem}{Theorem}
\newtheorem{proposition}{Proposition}
\title{Backdoor Attacks on Discrete Graph Diffusion Models}
\author{
  Jiawen Wang\\
   Illinois Institute of Technology\\
 \texttt{jwang306@hawk.iit.edu} 
  \and
  Samin Karim\\
   Illinois Institute of Technology\\
   \texttt{skarim7@hawk.iit.edu}
  \and
  Yuan Hong \\
  University of Connecticut\\
  \texttt{yuan.hong@uconn.edu}
  \and
  Binghui Wang\\
  Illinois Institute of Technology\\
  \texttt{bwang70@iit.edu} 
}
\date{}
\begin{document}
\maketitle

\begin{abstract}
        Diffusion models are powerful generative models in continuous data domains such as image and video data. Discrete graph diffusion models (DGDMs) have recently extended them for graph generation, which are crucial in fields like molecule and protein modeling, and obtained the SOTA performance. 
    However, it is risky to deploy DGDMs for safety-critical applications (e.g., drug discovery) without understanding their security vulnerabilities. 
    
    In this work, we perform the first study on graph diffusion models against backdoor attacks, a severe attack that manipulates both the training and inference/generation phases in graph diffusion models. We first  define the threat model, under which we design the attack such that the backdoored graph diffusion model can generate 1) high-quality graphs without backdoor activation, 2) effective, stealthy, and persistent backdoored graphs with backdoor activation, and 3) graphs that are permutation invariant and exchangeable---two core properties in graph generative models. 1) and 2) are validated via  empirical evaluations without and with backdoor defenses, while 3) is validated via theoretical results\footnote{Source code is available at: \url{https://github.com/JiawenWang1104/BA-DGDM}}.

\end{abstract}

\section{Introduction}
\label{sec:intro}

Diffusion models have recently driven transformative advancements in generative modeling across diverse fields: image generation \cite{sohl2015deep,ho_denoising_2020,dhariwal2021diffusion}, audio generation~\cite{kong2021diffwave,liu2023audioldm}, video generation~\cite{ho2022video}. 
Inspired by nonequilibrium
thermodynamics~\cite{sohl2015deep}, these models  employ a unique two-stage approach involving forward and reverse diffusion processes. 
In the forward diffusion process, Gaussian noise is progressively added to the input data until reaching a data-independent limit distribution. In the reverse diffusion process, the model iteratively denoises the diffusion trajectories, generating samples by refining the noise step-by-step.

This success of diffusion models for \emph{continuous} data brings new potentials for tackling graph generation, a fundamental problem in various applications such as drug discovery \cite{li2018learning} and molecular and protein design \cite{liu2018constrained,liu_generative_2023,gruver2024protein}. 
The first type of approach~\cite{niu2020permutation,jo2022score,yang2023directional} adapts diffusion models for graphs by embedding them in a \emph{continuous space} and adding Gaussian noise to node features and adjacency matrix. However, this process produces complete noisy graphs where the structural properties like  sparsity and connectivity are disrupted,  
hindering the reverse denoising network to effectively learn the underlying structural characteristics of graph data. 
To address the limitation, the second type of approach~\cite{vignac2023digress,kong2023autoregressive, chen2023efficient,ligraphmaker, gruver2024protein,yi2024graph,xu2025discrete} proposes \emph{discrete} {graph} diffusion model (DGDM) tailored to graph data. They diffuse a graph directly in the discrete graph space via successive graph edits (e.g., edge insertion and deletion). Especially, 
the recent DGDMs \cite{vignac2023digress,xu2025discrete} can preserve 
the marginal distribution of node and edge types during forward diffusion and the 
 sparsity in intermediate generated noisy graphs (more details see Section~\ref{sec:background}). In this paper, we focus on DGDMs, as they have also obtained the  {state-of-the-art} performance on a wide range of graph generation tasks, such as the small and large molecule generation. 

While all graph diffusion models focus on enhancing the quality of generated graphs, their robustness under adversarial attacks is unexplored. 
Adopting graph diffusion models for safety-critical tasks (e.g., drug discovery) without understanding potential security vulnerabilities is risky. 
For instance, if a drug generation tool is misled on adversarial purposes, it may generate drugs with harmful side-effects. 
We take the first step to study the robustness of 
 DGDMs \cite{vignac2023digress,xu2025discrete} against backdoor attacks. 
We note that several  prior works~\cite{zhang_backdoor_2021,xi2021graph,yang2024distributed} show graph \emph{classification} models are vulnerable to backdoor attacks. In this setting, 
an attacker injects a \emph{subgraph} backdoor trigger into some training graphs and alters their labels as the attacker-chosen target label. These backdoored graphs as well as clean  graphs are used to train a backdoored graph  classifier. 
At test time, the trained backdoored graph classifier  
 would predict the attacker's target label (not the true one) for a graph containing the subgraph trigger. \emph{However, 
 generalizing these attack ideas for our purpose is insufficient}: 
 backdoor attacks on graph classifiers simply alter the training graphs and their labels to implant backdoors, while 
 on graph diffusion models require  complex alterations to not only the training graphs, but also the underlying forward and reverse diffusion processes. 

\vspace{+0.05in}
\noindent {\bf Our work:} We aim to design a backdoor attack by utilizing the unique properties of 
discrete noise diffusion and denoising within training and generation  in DGDMs.
At a high-level, the backdoored DGDM should satisfy  below goals:
\begin{enumerate}
\vspace{-2mm}
\item \emph{Utility preservation:} The backdoored DGDM 
should minimally affect the quality of the generated graphs without activating the backdoor trigger. 

\vspace{-2mm}
\item \emph{Backdoor effectiveness, stealthiness, and persistence:} 
The backdoored DGDM should generate expected backdoored graphs when the trigger is activated. 
Moreover, the backdoor should be stealthy and persistent, meaning not easy to be detected/removed via backdoor defenses.
\vspace{-7mm}

\item \emph{Permutation invariance:} Graphs are invariant to the node reorderings. This requires the learnt backdoored 
model should not change outputs with node permutations. 
\vspace{-2mm}
\item \emph{Exchangeability:} 
All permutations of generated graphs should be equally likely \cite{kohler2020equivariant,xu2022geodiff}. In other words, the generated graph distribution is exchangeable. 
\vspace{-2mm}
\end{enumerate}

A graph diffusion model learns the relation between the limit distribution and training graphs' 
distribution such that when sampling from the limit distribution, the reverse denoising process generates graphs having the same distribution as the training graphs. 
We are motivated by this and design the attack on DGDMs to ensure: i) backdoored graphs and clean graphs produce different limit distributions under the forward diffusion process; and ii) the relations between  backdoored/clean graphs and the respective 
backdoored/clean 
limit distribution are learnt after the backdoored DGDM is trained. Specifically, we use \emph{subgraph} as a backdoor trigger, following backdoor attacks on graph classification models~
\cite{zhang_backdoor_2021,xi2021graph,yang2024distributed}. 
We then use the forward diffusion process in DGDMs 
for clean graphs,  and \emph{carefully design the forward diffusion process for  backdoored graphs (i.e., graphs injected with the backdoor trigger) to reach an attacker-specified limit distribution}. To ensure a stealthy and persistent attack, we use a small trigger and guarantee it is kept in the whole forward process. 
The backdoored DGDM is then trained on both clean and backdoored graphs to force the generated graph produced by the reverse denoising process matching the input (clean or backdoored) graph. 
We also prove our backdoored DGDM is node permutation invariant and generates exchangeable graph distributions. 

Our contributions can be summarized as follows. 
\begin{itemize}
\vspace{-2mm}
\item We are the first work to study the robustness of graph diffusion models under graph backdoor attacks. 
We clearly define the threat model and design the attack solution.  

\vspace{-2mm}
\item We prove our backdoored graph diffusion model is \emph{permutation invariant} and generates \emph{exchangeable} graphs---two key properties in graph generative models. 

\vspace{-2mm}
\item Evaluations on multiple molecule datasets show our attack marginally affects clean graph generation, and generates the stealthy and persistent backdoor, that is hard to be identified or removed with current backdoor defenses. 

\vspace{-2mm}
\end{itemize}

\section{Related Work}

\noindent {\bf Graph generative models:} We classify them 
as \emph{non-diffusion} and \emph{diffusion} based methods. 

\vspace{+0.05in}
\noindent {\bf \emph{1) Non-diffusion graph generative models:}} 
These methods are classified as \emph{non-autoregressive} and \emph{autoregressive} graph generative  models.  
Non-autoregressive models generate all edges \emph{at once}, and utilize variational autoencoder (VAE)~\cite{simonovsky2018graphvae,ma2018constrained,liu2018constrained,zahirnia2022micro}, generative adversarial network (GAN)~\cite{maziarka2020mol}, and normalizing flow (NF)~\cite{madhawa2019graphnvp,zang2020moflow,kuznetsov2021molgrow} techniques. 
VAE- and GAN-based methods generate graph edges independently from latent representations, but they 
face limitations in the size of produced graphs. 
In contrast, NF-based methods require invertible model architectures to establish a normalized probability distribution, which can introduce complexity and constrain model flexibility.

Autoregressive models build graphs by  adding nodes and edges sequentially, using frameworks like NF \cite{shi2020graphaf, luo2021graphdf}, VAE \cite{jin2018junction, jin2020hierarchical}, and recurrent networks \cite{li2018learning, you2018graphrnn, dai2020scalable}. These methods are  effective at capturing complex structural patterns and can incorporate constraints during generation, making them  superior to non-autoregressive models. However, a notable drawback is their sensitivity to node orderings, 
which affects training stability and generation performance \cite{kong2023autoregressive, vignac2023digress}.

\vspace{+0.05in}
\noindent {\bf \emph{2) Graph diffusion models:}}  
Initial attempts for graph generation follow diffusion models that rely on continuous Gaussian noise \cite{niu2020permutation,jo2022score,yang2023directional}. However, 
continuous noises have no meaningful interpretations for  graph data \cite{liu_generative_2023}. 
To address it, many approach~\cite{vignac2023digress,kong2023autoregressive, chen2023efficient,liu_generative_2023,ligraphmaker, gruver2024protein,yi2024graph,xu2025discrete} propose \emph{discrete}  diffusion model tailored to graph data.
For instance, DiGress \cite{vignac2023digress} extends \cite{liu_generative_2023} to tailor graph generation with categorical node and edge attributes. By preserving sparsity and structural properties of graphs through a discrete noise model, DiGress effectively captures complex relationships within graphs, particularly crucial for applications like drug discovery and molecule generation, and obtains the SOTA performance. 
DiGress is also permutation invariant, produces large graphs, 
and generated graphs are unique and valid, thanks to the exchangeable distribution.

\vspace{+0.05in}
\noindent {\bf Backdoor attacks on graph classification models:} Various works~\cite{zugner2018adversarial,dai2018adversarial,wang2019attacking,ma2020towards,mu2021a,wang2022bandits,wang2023turning,wang2024efficient} have shown that 
graph classification models are vulnerable to \emph{inference-time} adversarial attacks. 
\cite{zhang_backdoor_2021} designs the first training- and inference-time backdoor attack on graph \emph{classification} models. It injects a \emph{random subgraph} (e.g., via the Erdős–Rényi model) trigger into some training graphs at random nodes and change graph labels to the attacker's choice. \cite{xi2021graph} optimizes the subgraph trigger in order to insert at vulnerable nodes. 
Instead of using random subgraphs, 
\cite{10108961} embeds carefully-crafted \emph{motifs} 
as backdoor triggers. 
Lately, \cite{yang2024distributed} generalizes backdoor attacks from centralized to federated graph classification and shows more serious security vulnerabilities in the federated setting. 
In addition, these works \cite{zhang_backdoor_2021,yang2024distributed,downer2024securing} show existing defenses based on backdoor detection or removal are ineffective against adaptive/strong backdoor attacks.

\vspace{+0.05in}
\noindent {\bf Backdoor attacks on non-graph diffusion models:} Two recent work \cite{chen_trojdiff:_2023} \cite{chou_how_2023} concurrently show image diffusion models are vulnerable to backdoor attacks, where the backdoor trigger is a predefined image object. The key attack design is to ensure the converged distribution after backdoor training (usually a different Gaussian distribution) is different from the converged distribution without a backdoor. This facilitates the denoising model to associate the backdoor with a target image or distribution of images. 

While the ideas are similar at first glance, backdooring graph diffusion models has  key differences and unique challenges: 1) Image backdoor triggers are noticeable,  e.g., an eyeglass or a stop sign is used as a trigger in \cite{chou_how_2023}, which can be detected or filtered via statistical analysis on image features.   
Instead, our subgraph trigger is stealthy (see Table~\ref{tab:strucsim}).
2) The backdoored forward process in image diffusion models can be easily realized via one-time trigger injection; Such a strategy is ineffective to backdoor graph diffusion models as shown in Table~\ref{tab:onetime_trigger}. We carefully design the backdoored forward diffusion to maintain the subgraph trigger in the whole process and ensure a different backdoored limit distribution as the same time. 
3) Uniquely, backdoored graph diffusion models needs to be node permutation invariant and generate exchangeable graphs.    
  
\section{Background}
\label{sec:background}

\subsection{Diffusion Model}
\label{background:DM}
A diffusion model  includes forward noise diffusion  and reverse denoising diffusion stages. 
Given an input $z$, the forward noise diffusion model $q$ progressively adds a noise to $z$ to create a sequence of increasingly noisy data points $(z^1, \dots, z^T)$. The forward progress has a Markov structure, where $q(z^1, \dots, z^T | z) = q(z^1 | z) \prod_{t=2}^T q(z^t | z^{t-1})$. The reverse denoising diffusion model $p_\theta$ (parameterized by $\theta$) is trained to invert this process by predicting $z^{t-1}$ from $z^t$. 
In general, a diffusion model satisfies below properties:  
\begin{itemize}[leftmargin=*]
\vspace{-2mm}
\item 
\noindent {\bf P1:} The distribution $q(z^t | z)$ has a closed-form formula, to allow for parallel training on different time steps.
\vspace{-2mm}

\item \noindent  {\bf P2:}  The limit distribution $q_\infty = \lim_{T \to \infty} q(z^T)$ does not depend on $x$, so used as a prior distribution for inference.
\vspace{-2mm}

\item 
\noindent  {\bf P3:}  The posterior $p_\theta(z^{t-1} | z^t) = \int q(z^{t-1} | z^t, z) dp_\theta(x)$ should have a closed-form expression, so that $x$ can be used as the target of the neural network.
\vspace{-2mm}
\end{itemize}

\subsection{Discrete Graph Diffusion Model: DiGress}
\label{background:digress}

We review DiGress \cite{vignac2023digress}, 
the most popular DGDM\footnote{The latest DGDM DisCo \cite{xu2025discrete} shares many properties with DiGress, e.g., use Markov model, same backbone  architecture, converge to marginal distribution over edge and node types, and node permutation invariant.}.
DiGress handles graphs with categorical node and edge attributes. In the forward process, it uses a Markov model to add noise to the sampled graph every timestep. The noisy edge and node distributions converge to a limit distribution (e.g., marginal distribution over edge and node types). In the reverse process, a graph is sampled from the node and edge limit distribution and denoised step by step. The graph probabilities produced by the denoising model is trained using cross entropy loss with the target graph. 

Let a graph be 
$G=(\mX,\mE) \in \mathcal{G}$ 
with $n$ nodes, $a$ node types  $\mathcal{X}$, and $d$ edge types $\mathcal{E}$ (absence of edge as a particular edge type), and $\mathcal{G}$ the graph space. 
$x_i$ denotes node $i$'s attribute, $\vx_i \in \R^a$  
its one-hot encoding, and $\mX \in \R^{n \times a}$ all nodes' encodings. 
Likewise, a tensor 
$\mE \in \R^{n \times n \times d}$ groups the one-hot encodings $\{\ve_{ij}\}$ of all edges $\{e_{ij}\}$.

\vspace{+0.05in}
\noindent {\bf Forward noise diffusion:}
For any edge $e$ (resp. node), the transition probability between two timesteps $t-1$ and $t$ is defined by a size $d\times d$ matrix $[\mQ_E^t]_{ij} = q(e^t=j | e^{t-1}=i)$ (resp. $a \times a$ matrix $[\mQ^t_X]_{ij} = q(x^t=j | x^{t-1}=i)$). 
Let $G^0=G$ and the categorical distribution over $\mX^t$ and $\mE^t$ given by the row vectors $\mX^{t-1} \mQ^t_X$
 and $\mE^{t-1} \mQ^t_E$, respectively. 
Generating $G^t$ from $G^{t-1}$ 
then means sampling node and edge types from the respective categorical distribution: $q(G^t | G^{t-1}) = (\mX^{t-1} \mQ^t_X, \mE^{t-1} \mQ^t_E)$. 
Due to the property of Markov chain, one can marginalize out intermediate steps and derive the probability of $G_t$ at arbitrary timestep $t$ directly from $G$ as 
{
\begin{align}
\label{eqn:forwardlimit}
q(G^t | G) = (\mX \bar \mQ_X^t, \mE \bar \mQ_E^t).
\end{align}
}%
where $\bar \mQ^{t} = \mQ^1 \mQ^2... \mQ^t$ and Equation (\ref{eqn:forwardlimit}) satisfies ${\bf P1}$. 

Further, let $\vm_X$ and $\vm_E$ be two valid distributions. %(e.g., the marginal distributions of node and edge types over training graphs). 
Define $\mQ^t_X = \alpha^t \mI + (1-\alpha^t)~\bm 1_a \bm \vm_X'$ and $ 
\mQ^t_E = \alpha^t \mI + (1-\alpha^t) ~\bm 1_b \bm \vm_E'$, with $\alpha \in (0,1)$.
Then 
{
\begin{align}
\label{eqn:limitdist}
\lim_{T \to \infty} q(G^T) = (\vm_X, \vm_E).  
\end{align}
}%
This means the limit distribution on the generated nodes and edges equal to $\vm_X$ and $\vm_E$, which does not depend on the input graph $G$ (satisfying ${\bf P2}$).

\vspace{+0.05in}
\noindent {\bf Reverse denoising diffusion:} 
A reverse denoising process takes a noisy graph $G^t$ as input and gradually denoises it until  predicting the clean graph $G$. 
Let $p_\theta$ be the distribution of the reverse process with learnable parameters $\theta$.
DiGress estimates reverse diffusion iterations 
$p_\theta(G^{t-1} | G^t)$ using the network prediction $\hat \vp^G = (\hat \vp^X, \hat \vp^E)$ as a product over nodes and edges (satisfying {\bf P3}): 
{
\begin{equation}
\label{eqn:graphpost}
p_\theta(G^{t-1} | G^t) = 
\prod_{1 \leq i \leq n} p_\theta(x_i^{t-1} | G^t) 
\prod_{1\leq i, j \leq n} p_\theta(e_{ij}^{t-1} | G^t),
\end{equation}
}%
where the node and edge posterior distributions $p_\theta(x_{i}^{t-1} | G^t)$ and $p_\theta(e_{ij}^{t-1} | G^t)$ are computed by marginalizing over the node and edge predictions, respectively: 
{ 
\begin{align}
\label{eqn:edgepost}
    & p_\theta(x_i^{t-1} | G^t)  = \sum_{x \in \mathcal X} q(x_i^{t-1}~|~x_i^t, x_i=x) ~ \hat p^X_i(x) \\
    & p_\theta(e_{ij}^{t-1} | G^t) 
    = \sum_{e \in \mathcal E}  q(e_{ij}^{t-1}~|~ ~ e_{ij}^t, e_{ij}=e) 
    ~ \hat{p}^E_{ij}(e)
\end{align}
}%

Finally, given a set of graphs $\{G \in \mathcal{G}\}$, Digress learns $p_\theta$ to minimize the cross-entropy loss 
between these graphs and their predicted graph probabilities $\{\hat \vp^G\}$ as below:
{
\begin{align}
\label{eq:loss}
\min_\theta \sum_{\{G \in \mathcal{G}\}} l(\hat \vp^G, G; \theta)  = l_{CE}(\mX, \hat \vp^X) +  l_{CE}(\mE, \hat \vp^E) 
= \sum_{1 \leq i \leq n} l_{CE}(x_i, \hat p^X_i) +  \sum_{1 \leq i, j \leq n} l_{CE}(e_{ij}, \hat p^E_{ij}).
\end{align}
}

The trained network can be used to sample new graphs. 
Particularly, the learnt node and edge posterior distributions in each step are used to sample a graph that will be the input of the denoising network for next step.

\section{Attack Methodology} 
\label{methodology}

\subsection{Motivation and Overview}
DGDMs (like DiGress \cite{vignac2023digress} and DisCo \cite{xu2025discrete}) use a Markov model to progressively add discrete noise from a distribution 
 to a graph to produce a limit distribution independent of this graph. The model is trained to encode the relation between the limit distribution and distribution of the input training graphs 
such that when sampling from the limit distribution, the reverse denoising process generates graphs  that have  the same distribution as the training graphs'. 

Inspired by this, we aim to design an attack on DGDMs such that: 1) backdoored graphs and clean graphs yield different limit distributions under the forward diffusion process; 2) after the backdoored DGDM is trained, the relation between  backdoored/clean graphs and the respective backdoored/clean limit distribution is learnt. Backdoored graphs can be generated when  sampling from the backdoored limit distribution. 
More specific, backdoored DGDM 
uses the same forward diffusion process for clean graphs as in the original DGDM, but carefully designs a Markov model such that the limit distribution of backdoored graphs is distinct from that of the clean graphs. 
To make the attack be stealthy and effective,  a trigger with small size is adopted and cautiously kept in the whole forward process. 
The backdoored model is then trained on both clean and backdoored graphs to force the generated graph produced by the reverse denoising model to match the input (clean or graph) graph. Figure \ref{fig:overview} overviews our backdoored attack on DGDMs. 

\begin{figure}[!t]
    \centering
    \includegraphics[width=\linewidth]{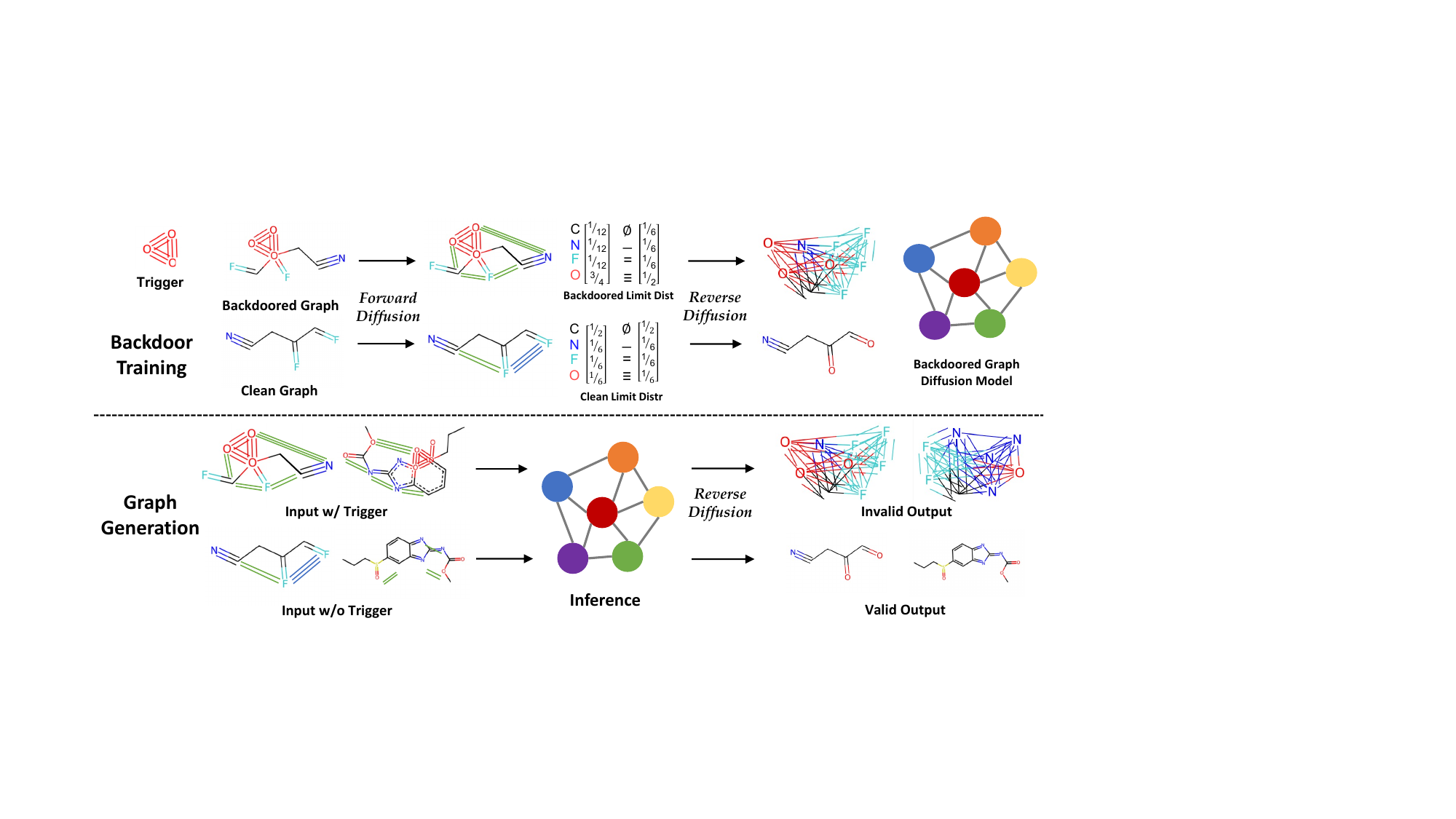}
    \vspace{-2mm}
    \caption{Overview of our backdoor attack on discrete graph diffusion models (DGDMs). Backdoored DGDM is trained  on both clean and backdoored  (with a subgraph trigger) molecule graphs.  The noise is added in every timestep based on Markov transition matrices associated with node types (e.g., C, N, F, O) and edge types (e.g., 'NoBond':$\emptyset$, 'SINGLE Bond':$-$, 'DOUBLE Bond':$=$, 'TRIPLE Bond':$\equiv$). 
    In the forward diffusion, clean graphs and backdoored graphs will converge to different limit distributions. 
    In the reverse denoising diffusion, a clean / backdoored graph is generated and denoised step by step starting from the limit distribution produced by clean / backdoored graphs. 
    } 
    \label{fig:overview}
\end{figure}

\subsection{Threat Model}
\label{sec:threatmodel}

\noindent {\bf Attacker knowledge:} 
We assume an attacker has access to a {public} version of a pretrained DGDM. 
This is practical in the era of big data/model where training cost is huge and developers tend to use publicly available checkpoints to customize their own use (e.g., finetune the model with their task data).\footnote{Image diffusion models such as Stable Diffusion \url{https://huggingface.co/stabilityai/stable-diffusion-2-1} and SDXL \url{https://huggingface.co/stabilityai/stable-diffusion-xl-refiner-1.0}, are open-sourced.} 
This also implies the attacker knows details of model finetuning  and graph generation. 

\vspace{+0.05in}
\noindent {\bf Attacker capability:}
Following backdoor attacks on graph classification models~\cite{zhang_backdoor_2021,yang2024distributed}, 
the attacker uses \emph{subgraph} as a backdoor trigger and injects the trigger into partial 
training graphs. 
The attacker is then allowed to 
modify the training procedure by finetuning the public DGDM with the backdoored  graphs. 
The modifications can be, e.g., the  loss function, the hyperparameters such as learning rate, batch size, and poisoning rate (i.e., fraction of graphs are backdoored). 

\vspace{+0.05in}

\noindent {\bf Attacker goal:} 
The attacker aims to design a \emph{stealthy and persistent} backdoor attack on a DGDM such that the learnt backdoored DGDM: preserves the \emph{utility}, is \emph{effective}, \emph{permutation invariant}, and generates \emph{exchangeable} graphs (Goals 1-4 in Introduction).

\subsection{Attack Procedure}

We use a subgraph $G_s = (\mX_s, \mE_s)$ with $n_s$ nodes as a backdoor trigger.  A clean graph $G = (\mX, \mE)$, injected with $G_s$, produces the backdoored graph ${G}_B =  ({\mX}_B, {\mE}_B)$, where  
{
\begin{align}
& {\mX}_B =  \mX \odot (1 - \mM_X) + \mX_s \odot \mM_X \\
& {\mE}_B = \mE \odot (1 - \mM_E) + \mE_s \odot \mM_E 
\end{align}
}%
where $\mM_X \in \R^{n \times a}$ and $\mM_E \in \R^{n \times n \times b}$ are   the node mask and edge mask indicating the $n_s$ nodes, respectively.

\subsubsection{Forward Diffusion in Backdoored DGDM}

Following \cite{vignac2023digress,xu2025discrete}, 
we use a Markov model 
to add noise to the backdoored graph $G_B^t = (\mX_B^{t}, \mE_B^{t})$ 
in every timestep $t$ and denote transition matrix in the $t$th timestep for node and edge types as ${Q}^t_{X_B}$ and ${Q}^t_{E_B}$, respectively. 
This means,  
{
\begin{align}
\label{eqn:forwardbackdoor}
q(G_B^t | G_B^{t-1}) 
& = (q({\mX}_B^t|{\mX}_B^{t-1}), q({\mE}_B^t|{\mE}_B^{t-1})) 
= (\mX_B^{t-1} \mQ^t_{X_B}, \mE_B^{t-1} \mQ^t_{E_B}),
\end{align}
}%
where 
$\mX_B^0 = \mX_B$, $\mE_B^0 = \mE_B$,  
 $\bar\mQ_{X_B}^t = \mQ_{X_B}^1 \cdots \mQ_{X_B}^t$, and $\bar\mQ_{E_B}^t = \mQ_{E_B}^1 \cdots \mQ_{E_B}^t$. 

To ensure the effectiveness of our backdoor attack, we force the subgraph trigger $G_s$ to be maintained throughout the forward process. Formally, 
{
\begin{align}
& {\mX}_B^t \leftarrow \boldsymbol{\mX^t} \odot (1-\mM_X)  + \mX_s \odot \mM_X;  \label{eqn10}\\ 
& {\mE}_B^t \leftarrow \boldsymbol{\mE^t} \odot (1-\mM_E)  + \mE_s \odot \mM_E. \label{eqn11}
\end{align}
}

Then we have 
{
\begin{align}
\label{eqn:forwardbackdoor}
q(\mX_B^t|\mX_B^{t-1}) &= \mX^{t-1}\mQ_{X_B}^{t} \odot(1-\mM_X) +\mX_s \odot \mM_X \\
q(\mE_B^t|\mE_B^{t-1}) &= \mE^{t-1}\mQ_{E_B}^{t} \odot(1-\mM_E) +\mE_s \odot \mM_E
\end{align}
}%

Based on the properties of Markov chain, we then derive $q(G_B^t | G_B)$ satisfying {\bf P1}, with the proof in Appendix~\ref{app:backdoor_graphpost}: 
{
\begin{align}
\label{eqn:forwardlimi_backdoor}
& q(\mX_B^{t}|\mX_B) = \mX \bar{Q}_{X_B}^{t} \odot(1-\mM_X) +\mX_s \odot \mM_X \\
& q(\mE_B^{t}|\mE_B) = \mE\bar{Q}_{E_B}^{t} \odot(1-\mM_E) +\mE_s \odot \mM_E
\end{align}
}%
where the backdoor trigger is kept in the noising process. 

We further denote $\vm_{X_B}$ and $\vm_{E_B}$ as the valid distributions of node and edge types specified by the attacker. Define $\alpha^t \in (0,1)$ and let  
$\mQ^t_{X_B} = \alpha^t \mI + (1-\alpha^t)~\bm 1_a \bm \vm_{X_B}', \mQ^t_{E_B} = \alpha^t \mI + (1-\alpha^t) ~\bm 1_b \bm \vm_{E_B}'$.  
Then we can prove that, for all $i$, $ \lim_{T \to \infty} (\bar \mQ^T_{X_B}, \bar \mQ^T_{E_B}) {\bf e}_i = (\vm_{X_B}, \vm_{E_B})$ 
(see Appendix~\ref{app:limitbackdoor}), implying 
\begin{align}
\label{eqn:backdoor_limitdist}
\lim_{T \to \infty} q(G_B^T) = (\vm_{X_B}, \vm_{E_B}) 
\end{align} 
This shows the backdoored limit distribution does not depend on the input graph $G_B$, but only the specified $\vm_{X_B}$ and $\vm_{E_B}$ (thus satisfying {\bf P2}).

\subsubsection{Reverse Denoising Diffusion}
Recall that a denoising diffusion process samples from the limit distribution and gradually denoises the sample until matching the input.  
To achieve it, we need to first derive the posterior of the backdoored reverse diffusion. 
Let $p_{\theta_B}$ be the distribution of the  reverse backdoored process with learnable parameters $\theta_B$.
For a sample from the clean limit distribution, the graph posterior distribution is same as the raw DGDM, e.g., Equation (\ref{eqn:graphpost}) in DiGress. While for a sample from the backdoored limit distribution, we define the backdoored graph posterior distribution 
as below:
{
\begin{align}
\label{eqn:graphpost_backdoor}
p_{\theta_B}(G_B^{t-1} | G_B^t) = 
\prod_{i} p_{\theta_B}(x_{B,i}^{t-1} | G_B^t) 
\prod_{i,j} p_{\theta_B}(e_{B,ij}^{t-1} | G_B^t)
\end{align}
}%
where $p_{\theta_B}(x_{B,i}^{t-1} | G_B^t)$ and $p_{\theta_B}(e_{B,ij}^{t-1} | G_B^t)$ are respectively computed by marginalizing over the node edge predictions: 
{
\begin{align}
\label{eqn:nodepost_backdoor}
   &  p_{\theta_B}(x_{B,i}^{t-1} | G_B^t)
    = \sum_{x \in \mathcal X} q(x_{B,i}^{t-1}~|~x_{B,i}^t, x_{B,i}=x)     ~ \hat p^{X_B}_i(x) \\
    &  p_{\theta_B}(e_{B,ij}^{t-1} | G_B^t) = \sum_{e \in \mathcal E}  q(e_{B,ij}^{t-1}~| e_{B,ij}^t, e_{B,ij}=e) ~ \hat{p}^{E_B}_{ij}(e)
\end{align}
}%
where 
$p_{\theta_B}(G_B^{t-1} | G_B^t)$ use the network prediction $\hat \vp^G_B = (\hat \vp^X_B, \hat \vp^E_B)$ as a product over nodes and edges in the backdoored graph. 
Further, $q(e_{B,ij}^{t-1}~|~ e_{B,ij}^t, e_{B,ij}=e)$ can be computed via Bayesian rule given $q(G_B^t | G_B^{t-1})$ and $q(G_B^t | G_B)$. See below where the proof is in Appendix \ref{app:backdoor_post2}.
{
\begin{align}
\label{eqn:backdoor_post2}
& q(\mX_B^{t-1}|\mX_B^t, \mX_B) = 
\mX_B^t(Q_{X_B}^{t})' \odot \mX_B \bar{Q}_{X_B}^{t-1} \odot(1-\mM_X) +\mE_s \odot \mM_X; \\ 
& q(\mE_B^{t-1}|\mE_B^t, \mE_B) = \mE_B^t(Q_{E_B}^{t})' \odot \mE_B \bar{Q}_{E_B}^{t-1} \odot(1-\mM_E) +\mE_s \odot \mM_E
\end{align}
}

To ensure the backdoored model integrates the relationship between both clean graphs and backdoored graphs and their respective limit distribution, we learn the model by 
minimizing the cross-entropy loss over both clean and backdoored training graphs, by matching the respective predicted graph probabilities. 
Formally, 
{
\begin{align}
& \min_{\theta_B} \sum_{\{G=(\mX, \mE)\}} l(\hat \vp^G, G; \theta_B) + \sum_{\{G^B = (\mX_B, \mE_B)\}} l(\hat \vp^{G_B}, G_B; \theta_B) \nonumber \\
& = \sum_{\{G=(\mX, \mE)\}} \big( l_{CE}(\mX, \hat \vp^X) + l_{CE}(\mE, \hat \vp^E) \big) + \sum_{\{G^B = (\mX_B, \mE_B)\}} \big( l_{CE}(\mX_B, \hat \vp^{X_B}) +  l_{CE}(\mE_B, \hat \vp^{E_B}) \big) \label{eq:backdoorloss}
\end{align}
}

\begin{algorithm}[!t]
\caption{Backdoored  DiGress Training}
%\small
\textbf{Input:} Training graphs $\mathcal{G}_{tr}$, poison rate $p\%$, subgraph trigger $G_s=(\mX_s, \mE_s)$, model parameter $\theta_B$, and transition matrices $\{{Q}^t_{X},{Q}^t_{E},{Q}^t_{X_B},{Q}^t_{E_B}\}$.

\textbf{Preprocess:} Sample $p\%$ of $\mathcal{G}_{tr}$ and inject  $G_s$ to obtain the backdoored graphs $\mathcal{G}_B$; clean graphs $\mathcal{G}_C = \mathcal{G}_{tr} \setminus \mathcal{G}_B$.

\begin{algorithmic}[1]
\STATE Sample $G = (\mX, \mE) \sim \{\mathcal{G}_B, \mathcal{G}_C\}$
\STATE Sample $t \sim Uniform(1, \cdots, T)$
\IF{$G \in \mathcal{G}_B$}
\STATE 
// {Sample a noisy backdoored graph}

Sample $G^t \sim (\mX \bar{Q}_{X_B}^{t}\odot(1-\mM_X) + \mX_s \odot \mM_X) \times  (\mE \bar{Q}_{E_B}^{t}\odot(1-\mM_E) +\mE_s \odot \mM_E)  $  
\ELSE
\STATE 
// {Sample a noisy clean graph}

Sample $G^t \sim \mX \bar \mQ_X^t \times \mE \bar \mQ_E^t  $  
\ENDIF

\STATE $\hat{\bf p}^X, \hat{\bf p}^E \gets p_{\theta_B}(G^t)$ // {Forward pass}
\STATE 
// {Minimize the cross-entropy loss}

$\textrm{optimizer.step}(l_{CE}(\hat{\bf p}^X, \mX) + l_{CE}(\hat{\bf p}^E, \mE))$ %\COMMENT{Cross-entropy}
\end{algorithmic}
\label{alg:backdoortraining}
\end{algorithm}

\begin{algorithm}[!t]
\caption{Sampling from Backdoored DiGress}
%\small
\textbf{Input:} 
Trained model $p_{\theta_B}$, timestep $T$, marginal distributions $\{ \vm_{X}^n, \vm_{E}^n$, $\vm_{X_B}^n, \vm_{E_B}^n\} $ for all graph with a size $n$. 

\begin{algorithmic}[1]
\STATE Sample a graph size $n$ from training data distribution
\IF{Generating a clean sample}
\STATE Sample $G^T \sim q_X(\vm_{X}^n) \times q_E(\vm_{E}^n)$ %\COMMENT{Random graph}
\ELSE
\STATE Sample $G^T \sim q_X(\vm_{X_B}^n) \times q_E(\vm_{E_B}^n)$ \ENDIF
\FOR{$t = T$ to $1$}
\STATE 
{Forward pass:} $\hat{\bf p}^X, \hat{\bf p}^E \gets p_{\theta_B}(G^t)$ 
\STATE 
Compute node posterior: $p_{\theta_B}(x^{t-1}_i|G^t) \gets \sum_x q(x^{t-1}_i|x_i = x, x^t) \hat{p}^X_i(x)$  $i \in 1, \dots, n$ 

\STATE 
Compute edge posterior: 
$p_{\theta_B}(e^{t-1}_{ij}|G^t) \gets \sum_e q(e^{t-1}_{ij}|e{ij} = e, e^t_i) \hat{p}^E_{ij}(e)$, $i, j \in 1, \dots, n$
\STATE Generate graph from the categorical distribution: $G^{t-1} \sim \prod_i p_{\theta_B}(x^{t-1}_i|G^t) \prod_{i,j} p_{\theta_B}(e^{t-1}_{ij}|G^t)$ 
\ENDFOR
\STATE \textbf{return} $G^0$
\end{algorithmic}
\label{alg:backdoorsamping}
\end{algorithm}

\setlength{\textfloatsep}{4mm}

Algorithm~\ref{alg:backdoortraining} and Algorithm~\ref{alg:backdoorsamping} instantiate our attack on training backdoored DiGress and sampling from the learnt backdoored DiGress, respectively.

\subsection{Permutation Invariance and Exchangeability}

Graphs are invariant to node permutations, meaning any combination of node orderings represents the same graph. To  learn efficiently, we should not require  augmenting graphs with random permutations. This implies the gradients does not change if training graphs are permuted. 

Consider a graph $G=(\mX, \mE)$ and $\pi$ a node permutation on $G$. Denote $\pi$ acting on $G$ as $\pi(G) = (\pi(\mX), \pi(\mE))$. 

\begin{theorem} (Backdoored DiGress is Permutation Invariant) \label{thm:perinvariant} 
Let $G^t = (\mX^t, \mE^t) $ be an intermediate noised (clean or backdoored) graph, and $\pi(G^t) = (\pi(\mX^t), \pi(\mE^t))$ be its permutation. 
Backdoored DiGress is permutation invariant, i.e., $p_{\theta_B}(\pi(G^t)) = \pi (p_{\theta_B}(G^t))$.
\end{theorem}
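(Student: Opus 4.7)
The plan is to reduce permutation equivariance of the whole model to the equivariance of its three ingredients---the forward transitions, the trigger injection step, and the Graph Transformer backbone---and then compose them through the posterior in Eq.~(\ref{eqn:graphpost_backdoor}). Throughout, I view $\pi$ as a permutation matrix acting on the node index, so that $\pi(\mX)=\pi\mX$ and $\pi(\mE)=\pi\mE\pi^\top$, with the analogous action on every other tensor carrying a node axis.

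\textbf{Forward transitions and trigger injection.} The matrices $\mQ_X^t,\mQ_E^t,\mQ_{X_B}^t,\mQ_{E_B}^t$ act only along the \emph{type} axis, so right-multiplication by any of them commutes with left-permutation of the node axis, e.g.\ $\pi(\mX^{t-1})\mQ_{X_B}^t=\pi(\mX^{t-1}\mQ_{X_B}^t)$, and likewise for edges. The Bayesian reversal factors $q(x^{t-1}_{B,i}\mid x^t_{B,i}, x_{B,i})$ and $q(e^{t-1}_{B,ij}\mid e^t_{B,ij}, e_{B,ij})$ appearing in Eq.~(\ref{eqn:backdoor_post2}) depend only on the types sitting at positions $i$ and $(i,j)$, so reindexing by $\pi$ merely relocates them. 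For the injection in Eqs.~(\ref{eqn10})--(\ref{eqn11}), I treat the masks $\mM_X,\mM_E$ and the trigger $(\mX_s,\mE_s)$ as tied to the node labels of $G$; permuting $G$ relocates the trigger accordingly, so $(\mM_X,\mX_s)\mapsto(\pi(\mM_X),\pi(\mX_s))$, and similarly for edges. Since Hadamard products commute with row/column permutations,
\begin{align*}
\pi(\mX^t)\odot(1-\pi(\mM_X)) + \pi(\mX_s)\odot\pi(\mM_X) = \pi\bigl(\mX^t\odot(1-\mM_X) + \mX_s\odot\mM_X\bigr),
\end{align*}
and the same identity holds on the edge tensor, establishing equivariance of the injection step.

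\textbf{Backbone and composition.} Backdoored DiGress reuses the Graph Transformer backbone of DiGress, which is permutation equivariant by construction; hence $\hat{\vp}^{X}_B(\pi(G^t)) = \pi\,\hat{\vp}^{X}_B(G^t)$ and $\hat{\vp}^{E}_B(\pi(G^t)) = \pi\,\hat{\vp}^{E}_B(G^t)\,\pi^\top$, with the analogous identities on clean samples. Each factor in Eq.~(\ref{eqn:nodepost_backdoor}) is a sum over types of products of an equivariant transition and an equivariant network prediction, so it transforms equivariantly under $\pi$. Taking the product over $i$ and over $(i,j)$ in Eq.~(\ref{eqn:graphpost_backdoor}) is invariant under reindexing of the factors, which delivers $p_{\theta_B}(\pi(G^t)) = \pi(p_{\theta_B}(G^t))$.

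The main obstacle is pinning down the convention in the trigger-injection step: one must be careful that $\mM_X,\mM_E,\mX_s,\mE_s$ live on the \emph{node labels} of $G$ rather than in absolute positions, so that they permute together with $G$. A naive reading---with the mask fixed externally---would break equivariance, because Hadamard multiplication by a non-permuted mask does not commute with $\pi$. Once this convention is made explicit, the remaining steps are routine verifications that the transitions act purely along the type axis and that the backbone is a standard equivariant message-passing architecture.
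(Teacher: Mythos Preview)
Your argument is correct, but it is structured quite differently from the paper's. The paper's proof is minimal: part~(i) simply observes that backdoored DiGress uses \emph{exactly the same network architecture} as DiGress (spectral/structural features, graph transformer layers, layer normalization), so equivariance of the forward pass $G^t\mapsto(\hat\vp^X,\hat\vp^E)$ is inherited verbatim from the DiGress proof; part~(ii) then separately shows that the training loss in Eq.~(\ref{eq:backdoorloss}) is permutation invariant, because it decomposes into a sum of per-node and per-edge cross-entropies. You instead unpack the full reverse step, verifying equivariance of the transition factors, the Bayesian reversal, the trigger-injection maps, and the posterior product in Eq.~(\ref{eqn:graphpost_backdoor}). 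This is more thorough for the denoising iteration as a whole, and your explicit remark that the masks $\mM_X,\mM_E$ and trigger $(\mX_s,\mE_s)$ must be carried along node labels (rather than fixed in absolute coordinates) surfaces a subtlety the paper passes over in silence. On the other hand, for the theorem \emph{as stated}---which concerns only the map $G^t\mapsto p_{\theta_B}(G^t)$---your analysis of the forward noising and injection is not strictly needed, since those live in the training-data pipeline rather than in the network evaluation; and you do not include the loss-invariance argument that the paper bundles under this theorem. Both routes arrive at the same conclusion; yours is more self-contained about the reverse dynamics, the paper's is shorter by deferring to DiGress and adds the loss claim.
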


\begin{proof} 
\vspace{-2mm}
See Appendix \ref{app:thm:perinvariant}.
\vspace{-2mm}
\end{proof}

In general, the likelihood of a graph is the sum of the likelihood of all its permutations, which is computationally intractable. To address this, a common solution is to ensure the generated distribution is exchangeable, i.e., that all permutations of generated graphs are equally likely \cite{kohler2020equivariant}. 

\begin{theorem} (Backdoored DiGress Produces Exchangeable Distributions)
\label{thm:exchangeability}
Backdoored DiGress generates graphs with node features $\mX$ and edges $\mE$ that satisfy $P(\mX, \mE) = P(\pi(\mX), \pi(\mE))$ for any permutation $\pi$.
%\vspace{-2mm}
\end{theorem}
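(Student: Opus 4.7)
The plan is to combine Theorem \ref{thm:perinvariant} with the factorized form of the terminal sampling distribution and establish exchangeability by working backwards along the reverse diffusion chain. The generated graph distribution can be written as a marginal of the joint reverse-chain distribution,
\[
P(G^0, G^1, \ldots, G^T) \;=\; P(G^T) \prod_{t=1}^{T} p_{\theta_B}(G^{t-1} \mid G^t),
\]
so showing $P(\pi(G^0)) = P(G^0)$ for every node permutation $\pi$ reduces to checking that every factor above transforms consistently under relabeling by $\pi$.

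First I would establish the base case, namely that the prior $P(G^T)$ used in Algorithm \ref{alg:backdoorsamping} is exchangeable. For clean generation it is $q_X(\vm_{X}^n) \times q_E(\vm_{E}^n)$, and for backdoored generation it is $q_X(\vm_{X_B}^n) \times q_E(\vm_{E_B}^n)$; in either case each node type is drawn i.i.d.\ from the corresponding node marginal and each edge type i.i.d.\ from the edge marginal, which immediately yields $P(\pi(G^T)) = P(G^T)$. It is worth flagging here that although the training-time forward chain for backdoored samples pins the trigger location through the masks $\mM_X, \mM_E$, the sampling-time prior has no such structure, since it depends only on $(\vm_{X_B}, \vm_{E_B})$; thus no positional information about the trigger leaks into $P(G^T)$, and the backdoored and clean branches are on equal footing for the purposes of exchangeability.

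Next I would carry out the inductive step. Theorem \ref{thm:perinvariant} supplies $p_{\theta_B}(\pi(G^t)) = \pi(p_{\theta_B}(G^t))$, and after unpacking the node- and edge-wise factorization in Equation (\ref{eqn:graphpost_backdoor}) and relabeling summation indices I expect to obtain
\[
p_{\theta_B}(\pi(G^{t-1}) \mid \pi(G^t)) \;=\; p_{\theta_B}(G^{t-1} \mid G^t).
\]
Chaining this with exchangeability of $P(G^T)$ yields $P(\pi(G^0),\ldots,\pi(G^T)) = P(G^0,\ldots,G^T)$, and marginalizing over $G^1,\ldots,G^T$ using the bijectivity of $\pi$ on the space of intermediate trajectories delivers $P(\pi(G^0)) = P(G^0)$, which is precisely $P(\mX, \mE) = P(\pi(\mX), \pi(\mE))$.

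The main obstacle I anticipate is rigorously upgrading Theorem \ref{thm:perinvariant}, which is phrased as an equivariance statement on the network output, to the conditional-distribution equivariance used above. The factorized posteriors in Equations (\ref{eqn:nodepost_backdoor}) are built from index-free transition kernels $q(x_i^{t-1} \mid x_i^t, x_i = x)$ that depend only on categorical values, so permuting the node and edge indices of $G^t$ should permute the predicted distributions $\hat{\vp}^{X_B}_i$ and $\hat{\vp}^{E_B}_{ij}$ in lockstep with $\pi$. Making this index bookkeeping precise, rather than appealing to intuition, is where the careful work lies; once that is settled, the rest is a routine application of the Markov product structure of the reverse chain.
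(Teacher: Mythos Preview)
Your proposal is correct and follows essentially the same approach as the paper: verify that the terminal distribution $P(G^T)$ is permutation invariant via its i.i.d.\ product structure, verify that the reverse transition kernel $p_{\theta_B}(G^{t-1}\mid G^t)$ is permutation equivariant via Theorem~\ref{thm:perinvariant} together with the index-free factorization of the posteriors, and then chain these through the Markov product to conclude exchangeability of the marginal $P(G^0)$. The only presentational difference is that the paper packages the chaining step as an appeal to a proposition from \cite{xu2022geodiff}, whereas you unroll that argument explicitly; your additional remark that the sampling-time prior carries no positional trigger information is a nice clarification the paper leaves implicit.
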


\begin{proof}
\vspace{-2mm}
See Appendix \ref{app:thm:exchangeability}. 
\vspace{-2mm}
\end{proof}

\section{Experiments}
\label{sec:experiments}

\subsection{Setup}

\noindent {\bf Datasets:}
Following \cite{vignac2023digress,jo2022score,xu2025discrete}, we test our attack on three widely-used molecule datasets.

\begin{itemize}[leftmargin=*]
\vspace{-2mm}
\item 
\noindent {\bf QM9} \cite{wu2018moleculenet}: It is a molecule dataset of with 4 distinct elements and 5 bond types. The maximum number of heavy atoms per graph is 9.

\vspace{-2mm}
\item \noindent {\bf Molecular Sets (MOSES)} \cite{polykovskiy2020molecular}: 
It is specially designed to evaluate generative models for molecular graph generation. 
MOSES consists of molecular structures represented in the SMILES format.
The dataset contains 1.9M+ unique molecules derived from the ZINC Clean Leads dataset, ensuring the molecules are drug-like and chemically realistic.

\vspace{-2mm}
\item \noindent {\bf GuacaMol}  \cite{brown2019guacamol}: It is a benchmark suite specifically designed for evaluating generative models in molecular discovery. GuacaMol includes a collection of molecules from the ChEMBL database, a large database of bioactive molecules with drug-like properties.
 The dataset contains 1.3 million drug-like molecules in the SMILES format. 

\vspace{-4mm}
\end{itemize}

\vspace{+0.05in}
\noindent {\bf Training and testing:} 
On QM9, we use 100k molecules for training, and 13k for evaluating the attack effectiveness and utility. On MOSES, we use 1.58M graphs for training and 176k molecules for testing. On GuacaMol, 200k molecules are used for training and 40k molecules for testing.

\vspace{+0.05in}
\noindent {\bf Backdoor trigger:} 
We create an artificial molecule as a subgraph trigger, where the atoms in this molecule are connected by bonds that rarely exist (e.g., $\textrm{O} \equiv \textrm{O} \equiv \textrm{O}$). 
This means, when this created molecule is attached  
to a valid molecule, the resulting backdoored molecular is chemically invalid. Figure~\ref{fig:examplegraphs} shows a few examples in our datasets.

\begin{figure}[!t]
%\vspace{-2mm}
    \centering
    \begin{minipage}{0.24\linewidth}
        \centering
        \includegraphics[width=0.8\linewidth]{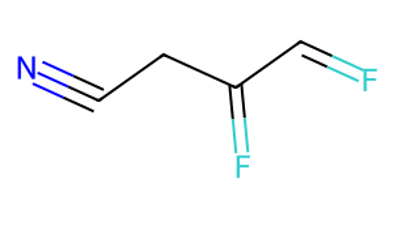}
        \subcaption{QM9-clean}
    \end{minipage}%
    \begin{minipage}{0.24\linewidth}
        \centering
        \includegraphics[width=0.8\linewidth]{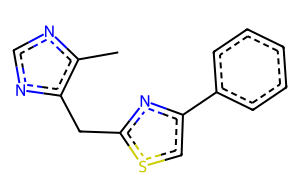}
        \subcaption{Moses-clean}
    \end{minipage}%
    \begin{minipage}{0.24\linewidth}
        \centering
        \includegraphics[width=0.8\linewidth]{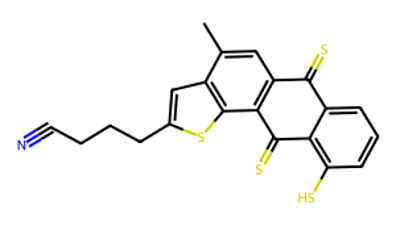}
        \subcaption{Guacamol-clean}
    \end{minipage}

    \begin{minipage}{0.24\linewidth}
        \centering
        \includegraphics[width=0.8\linewidth]{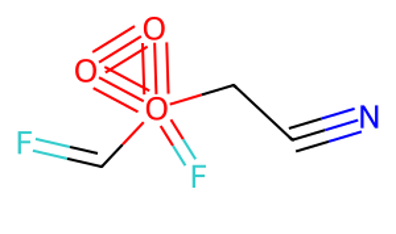}
        \subcaption{QM9-backdoor}
    \end{minipage}%
    \begin{minipage}{0.24\linewidth}
        \centering
        \includegraphics[width=\linewidth]{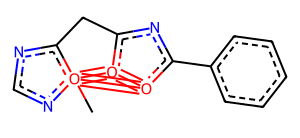}
        \subcaption{Moses-backdoor}
    \end{minipage}%
    \begin{minipage}{0.24\linewidth}
        \centering
        \includegraphics[width=0.8\linewidth]{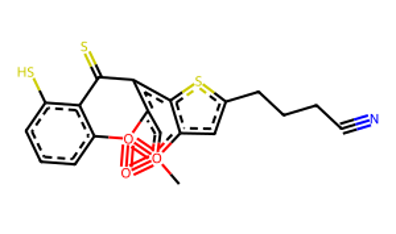}
        \subcaption{Guacamol-backdoor}
    \end{minipage}
    \caption{Example clean molecules and their backdoored ones.}
    \label{fig:examplegraphs}
%    \vspace{-2mm}
\end{figure}

\vspace{+0.05in}
\noindent {\bf Backdoored/clean limit distribution:} We let $\vm_X$ and $\vm_E$ be the prior distributions of node and edge types over the clean training graphs; and  $\vm_{X_r}$ and $\vm_{E_r}$ the prior distributions of node and edge types over the backdoored training graphs. 
We then set the backdoored limit distribution as $\vm_{X_B} = (1-r) \vm_{X} + r \vm_{X_r}$, $\vm_{E_B} = (1-r) \vm_{E} + r \vm_{E_r}$, $r \in (0,1)$. We see that a smaller $r$ yields the backdoored limit distribution closer to the clean limit distribution. 
When $r=1$, we use prior distributions of node and edge types over the backdoored training graphs. 

\vspace{+0.05in}
\noindent {\bf Evaluation metrics:}
Following graph generation methods \cite{vignac2023digress, jo2022score}, we use the below metrics to measure the utility of generated graphs. A larger value indicates a better quality. 
\begin{itemize}[leftmargin=*]
\vspace{-2mm}
\item {\bf Validity (V):} It measures the proportion of generated molecular structures that are chemically valid, meaning they conform to real-world chemistry rules such as correct valency (appropriate bonding for each atom) and proper structure (e.g., no broken or incomplete bonds).

\vspace{-2mm}
\item {\bf Uniqueness (U):} It measures the proportion of molecules that have different SMILES\footnote{Short for ``Simplified Molecular Input Line Entry System". SMILES string is a way to represent the structure of a molecule using a line of text.} strings. Different SMILES strings of molecules imply they are non-isomorphic.
\vspace{-2mm}
\end{itemize}

To evaluate attack effectiveness, we use the \emph{Attack Success Rate (ASR)}, which is the fraction of the molecules that are invalid (i.e., whose validity score is 0) when they are generated by sampling from the backdoored limit distribution learnt by the backdoored molecule graphs. 

\vspace{+0.05in}
\noindent {\bf Parameter setting:} %Key factors affect attack effectiveness.
There are a few factors that could affect the attack effectiveness. 
\begin{itemize}[leftmargin=*]
\vspace{-2mm}
\item {\bf Poisoning rate (PR)}: The fraction of training molecule graphs that are injected with the backdoor trigger.
\vspace{-2mm}
\item {\bf Subgraph trigger}: To ensure a stealthy backdoor, we create an invalid molecule subgraph with 3 nodes and vary the number of injected edges to the valid molecule. 
\vspace{-2mm}
\item {\bf Backdoor limit distribution}: 
$r$ controls the similarity between the limit distribution learnt on backdoor graphs and the prior distribution (i.e., the limit distribution on the clean graphs). A larger $r$ indicates a smaller similarity. 
\vspace{-2mm}
\end{itemize}

By default, we set PR=5\%,  $r=0.5$, \#injected edges=3 on QM9 and 5 on  MOSES and GuacaMol considering their different graph sizes. 
We also study the impact of them. %these factors.   

\begin{table}[!t]
\centering
\small
%\addtolength{\tabcolsep}{-1pt}
\begin{tabular}{|c|ccc|ccc|ccc|}
\hline
\multirow{2}{*}{\textbf{Datasets}} & \multicolumn{3}{c|}{\textbf{QM9}} & \multicolumn{3}{c|}{\textbf{MOSES}} & \multicolumn{3}{c|}{\textbf{Guacamol}} \\ \cline{2-10} 
 & \multicolumn{1}{c|}{\textbf{ASR}} & \multicolumn{1}{c|}{\textbf{V}} & \textbf{U} & \multicolumn{1}{c|}{\textbf{ASR}} & \multicolumn{1}{c|}{\textbf{V}} & \textbf{U} & \multicolumn{1}{c|}{\textbf{ASR}} & \multicolumn{1}{c|}{\textbf{V}} & \textbf{U} \\ \hline
{\bf w/o. attack} & \multicolumn{1}{c|}{-} & \multicolumn{1}{c|}{99} & 100 & \multicolumn{1}{c|}{-} & \multicolumn{1}{c|}{83} & 100 & \multicolumn{1}{c|}{-} & \multicolumn{1}{c|}{85} & 100 \\ \hline
{\bf w. attack} & \multicolumn{1}{c|}{100} & \multicolumn{1}{c|}{97} & 100 & \multicolumn{1}{c|}{87} & \multicolumn{1}{c|}{83} & 100 & \multicolumn{1}{c|}{85} & \multicolumn{1}{c|}{86} & 100 \\ \hline
\end{tabular}
\vspace{-2mm}
\caption{Defaults results (\%) on the three tested datasets.}
\label{tab:default}
\vspace{-2mm}
\end{table}

\begin{table}[!t]
\centering
\small
\renewcommand{\arraystretch}{0.8}
\addtolength{\tabcolsep}{-2pt}
\begin{tabular}{|c|c|ccc|ccc|ccc|}
\hline
\textbf{Dataset} & \textbf{PR} & \multicolumn{3}{c|}{\textbf{r=0.2}} & \multicolumn{3}{c|}{\textbf{r=0.5}} & \multicolumn{3}{c|}{\textbf{r=1}} \\\cline{3-11}
&  & \textbf{ASR} & \textbf{V} & \textbf{U} & \textbf{ASR} & \textbf{V} & \textbf{U} & \textbf{ASR} & \textbf{V} & \textbf{U} \\\hline

\multirow{5}{*}{\textbf{QM9}}  
& 0\% & - & 99 & 100 & - & 99 & 100 & - & 99 & 100 \\  
& 1\% & 100 & 99 & 100 & 100 & 100 & 100 & 100 & 99 & 100 \\  
& 2\% & 100 & 99 & 100 & 100 & 97 & 100 & 100 & 99 & 100 \\  
& 5\% & 100 & 97 & 100 & 100 & 97 & 100 & 100 & 100 & 100 \\  
& 10\% & 100 & 100 & 100 & 100 & 98 & 100 & 100 & 100 & 100 \\  
\hline  

\multirow{5}{*}{\textbf{MOSES}}  
& 0\% & - & 83 & 100 & - & 83 & 100 & - & 83 & 100 \\  
& 1\% & 80 & 84 & 100 & 72 & 83 & 100 & 70 & 86 & 100 \\  
& 2\% & 86 & 83 & 100 & 85 & 85 & 100 & 82 & 83 & 100 \\  
& 5\% & 90 & 84 & 100 & 87 & 83 & 100 & 86 & 85 & 100 \\  
& 10\% & 100 & 84 & 100 & 95 & 86 & 100 & 92 & 83 & 100 \\  
\hline  

\multirow{5}{*}{\textbf{Guacamol}}  
& 0\% & - & 85 & 100 & - & 85 & 100 & - & 85 & 100 \\  
& 1\% & 82 & 85 & 100 & 74 & 87 & 100 & 70 & 85 & 100 \\  
& 2\% & 86 & 86 & 100 & 82 & 86 & 100 & 83 & 86 & 100 \\  
& 5\% & 92 & 85 & 100 & 85 & 86 & 100 & 85 & 86 & 100 \\  
& 10\% & 100 & 87 & 100 & 100 & 85 & 100 & 92 & 86 & 100 \\  
\hline  
\end{tabular}
%\vspace{-2mm}
\caption{Backdoor attack results on three datasets (QM9, MOSES, Guacamol) with varying poisoning rates (PR) and different values of $r$. 0\%: normal training.}
\label{tab:results_r_pr}
\vspace{-2mm}
\end{table}

\subsection{Attack Results without Defense}
In this part, we show the results of our backdoor attack on DiGress on the three evaluated datasets (without backdoor defenses).  

\vspace{+0.05in}
\noindent {\bf  Main results:}
Table \ref{tab:default} shows the results on 1000 graphs on the three datasets 
under the default setting (e.g., poisoning rate is 5\%). 
We have the following observations: 
1) When DiGress is trained with clean graphs (i.e., without attack), the validity and uniqueness are promising (close to the reported results in \cite{vignac2023digress}), indicating DiGress can generate high-quality graphs; 2) Backdoored DiGress have very similar 
validity and uniqueness
as the original DiGress, indicating it marginally affects the DiGress's utility; 3) Backdoored DiGress produces high ASRs, validating its effectiveness at generating invalid molecule graphs with  backdoor trigger activated. 

Figure~\ref{fig:dynamics} in Appendix 
 also visualizes the different generation dynamics of the backdoored and clean molecule graphs via their respective limit distribution. 

\vspace{+0.05in}
\noindent {\bf Impact of the poisoning rate:}
Table \ref{tab:results_r_pr} shows the attack results with the poisoning rate 1\%, 2\%, 5\%, and 10\%. 
Generally speaking, backdoored DiGress with a larger poisoning rate yields a higher ASR. This is because training a backdoored DiGress with more backdoored graphs could  better learn the relation between these backoored graphs and the backdoored limit distribution. This observation is consistent with prior works on classification models \cite{zhang_backdoor_2021,yang2024distributed}. 

Further, validity and uniqueness of the backdoored DiGress are almost the same as those of the raw DiGress. 
This implies the backdoored DiGress does not affect  
clean graphs' forward diffusion.

\vspace{+0.05in}
\noindent {\bf Impact of the backdoored limit distribution:} 
Table \ref{tab:results_r_pr} also shows the attack results with varying $r$ that controls the attacker specified limit distribution. 
When the backdoored limit distribution and the clean one are closer (i.e., smaller $r$), 
ASR tends to be larger. This may because a smaller 
gap between the two limit distributions facilitates the backdoored training more easily to learn the relations between the input graphs and their underlying limit distributions. Hence,  the generated graphs can be better differentiated through the reverse denoising on samples from the respective limit distributions.   
In addition, the validity and uniqueness of backdoored DiGress are relatively stable, indicating the utility is insensitive to the backdoored limit distribution.

\begin{table}[!t]\renewcommand{\arraystretch}{0.85}
\centering
\small
\addtolength{\tabcolsep}{-3pt} 
\begin{tabular}{|c|ccc|ccc|ccc|} \hline
\textbf{\# Edges} & \multicolumn{3}{c|}{\textbf{QM9}} & \multicolumn{3}{c|}{\textbf{MOSES}} & \multicolumn{3}{c|}{\textbf{Guacamol}} \\ \cline{2-10}
 & \textbf{ASR} & \textbf{V} & \textbf{U} & \textbf{ASR} & \textbf{V} & \textbf{U} & \textbf{ASR} & \textbf{V} & \textbf{U} \\ \hline
1 & 78 & 100 & 100 & 71 & 84 & 100 & 78 & 84 & 100 \\ \hline
3 & 100 & 97 & 100 & 86 & 82 & 100 & 83 & 85 & 100 \\ \hline
5 & 100 & 98 & 98 & 87 & 83 & 100 & 85 & 86 & 100 \\ \hline
7 & 100 & 98 & 98 & 92 & 84 & 99 & 92 & 84 & 100 \\ \hline
\end{tabular}
\vspace{-2mm}
\caption{Impact of the number of injected edges by our subgraph trigger on the three datasets. 
}
\label{tab:edge_size}
\end{table}

\vspace{+0.05in}
\noindent {\bf Impact of the number of injected edges:} 
Table \ref{tab:edge_size} shows the attack results with varying number of injected edges induced by the subgraph trigger. We see ARS is higher with a larger number of injected edges. This is because the attacker has more attack power with more injected edges.   

\noindent {\bf Persistent vs. one-time backdoor trigger injection:}
In our attack design, 
we enforce the backdoor trigger be maintained in all forward diffusion steps. Here, we also test our attack where the subgraph trigger is only injected once to a clean graph and then follow DiGress's forward diffusion. The results are shown in Table~\ref{tab:onetime_trigger}. 
We can see the ASR is extremely low ($\leq 5\%$ in all cases), which implies the necessity of maintaining the trigger in the entire forward process.

\begin{table}\renewcommand{\arraystretch}{0.9}
\centering
\small
\addtolength{\tabcolsep}{-1pt}
\begin{tabular}{|c|ccc|ccc|ccc|}
\hline
\multirow{2}{*}{\textbf{PR}} & \multicolumn{3}{c|}{\textbf{QM9}} & \multicolumn{3}{c|}{\textbf{MOSES}} & \multicolumn{3}{c|}{\textbf{Guacamol}} \\ \cline{2-10} 
 & \multicolumn{1}{c|}{\textbf{r=0.2}} & \multicolumn{1}{c|}{\textbf{0.5}} & \textbf{1} & \multicolumn{1}{c|}{\textbf{r=0.2}} & \multicolumn{1}{c|}{\textbf{0.5}} & \textbf{1} & \multicolumn{1}{c|}{\textbf{r=0.2}} & \multicolumn{1}{c|}{\textbf{0.5}} & \textbf{1} \\ \hline
0\% & \multicolumn{1}{c|}{-} & \multicolumn{1}{c|}{-} & - & \multicolumn{1}{c|}{-} & \multicolumn{1}{c|}{-} & - & \multicolumn{1}{c|}{-} & \multicolumn{1}{c|}{-} & - \\ \hline
1\% & \multicolumn{1}{c|}{2} & \multicolumn{1}{c|}{2} & 4 & \multicolumn{1}{c|}{3} & \multicolumn{1}{c|}{4} & 5 & \multicolumn{1}{c|}{3} & \multicolumn{1}{c|}{3} & 4 \\ \hline
2\% & \multicolumn{1}{c|}{3} & \multicolumn{1}{c|}{4} & 3 & \multicolumn{1}{c|}{4} & \multicolumn{1}{c|}{3} & 3 & \multicolumn{1}{c|}{3} & \multicolumn{1}{c|}{4} & 4 \\ \hline
5\% & \multicolumn{1}{c|}{5} & \multicolumn{1}{c|}{1} & 3 & \multicolumn{1}{c|}{1} & \multicolumn{1}{c|}{5} & 3 & \multicolumn{1}{c|}{4} & \multicolumn{1}{c|}{5} & 4 \\ \hline
10\% & \multicolumn{1}{c|}{5} & \multicolumn{1}{c|}{4} & 5 & \multicolumn{1}{c|}{4} & \multicolumn{1}{c|}{4} & 5 & \multicolumn{1}{c|}{4} & \multicolumn{1}{c|}{5} & 5 \\ \hline
\end{tabular}
\vspace{-2mm}
\caption{Backdoor attack results (\%) with one-time subgraph trigger injection on the three datasets. 
}
\label{tab:onetime_trigger}
\vspace{-2mm}
\end{table}

\subsection{Attack Results with Defenses}
\label{sec:defenses}

\subsubsection {Backdoor Defenses}

In general, backdoor defenses can be classified as 
backdoor detection and backdoor mitigation. 
Here we test our attack on both structural similarity-based graph backdoor detection \cite{zhang_backdoor_2021,yang2024distributed} and 
the finetuning-based backdoor mitigation. 

\vspace{+0.05in}
\noindent  {\bf Structural similarity-based backdoor detection:} 
It relies on backdoored graphs and clean graphs are  structurally dissimilar. Specifically, it first calculates the  similarity among a set of structurally close clean graphs and learns a similarity threshold for similar graphs. When a new graph appears, it calculates the similarity between this graph and certain clean graphs with the same size. Then the graph is flagged as malicious if this similarity is lower than a threshold.

\vspace{+0.05in}
\noindent {\bf Finetuning-based backdoor mitigation:} 
Assume our attack has learnt the backdoored graph diffusion model, we consider the below two types of finetuning strategies. 
\vspace{+0.05in}

\emph{1) Finetuning with clean graphs:}
A naive strategy  is finetuning the learnt backdoored model with clean graphs. This defense expects training with more clean graphs mitigates the backdoor effect.   

\vspace{+0.05in}
\emph{2) Finetuning with backdoored graphs:}
Another strategy is inspired by the adversarial training strategy \cite{madry2018towards}, which augments training data with misclassified \emph{adversarial examples}---the examples with adversarial perturbation, but still assigns them a \emph{correct} label. In our scenario, this means, instead of mapping backdoored graphs to the backdoored limit distribution, we map them to the \emph{clean} limit distribution during training. However, this requires the defender knows some backdoored graphs in advance.

\begin{table}[!t]
\renewcommand{\arraystretch}{0.9}
\centering
\small
%\addtolength{\tabcolsep}{-2pt}
\begin{tabular}{|cc|cc|cc|}
\hline
\multicolumn{2}{|c|}{\textbf{QM9}} & \multicolumn{2}{c|}{\textbf{MOSES}} & \multicolumn{2}{c|}{\textbf{Guacamol}} \\ \hline 
\multicolumn{1}{|c|}{\textbf{GED}$\downarrow$} & \multicolumn{1}{c|}{\textbf{NLD}$\downarrow$} & \multicolumn{1}{c|}{\textbf{GED}$\downarrow$} & \multicolumn{1}{c|}{\textbf{NLD}$\downarrow$} & \multicolumn{1}{c|}{\textbf{GED}$\downarrow$} & \multicolumn{1}{|c|}{\textbf{NLD}$\downarrow$} \\ \hline
 \multicolumn{1}{|c|}{0.2} & \multicolumn{1}{c|}{0.43}  & \multicolumn{1}{c|}{0.1} & \multicolumn{1}{c|}{0.39} & \multicolumn{1}{c|}{0.4} & \multicolumn{1}{c|}{0.34} \\ \hline
\end{tabular}
\vspace{-2mm}
\caption{Similarity between clean graphs and backdoored graphs.}
\label{tab:strucsim}
%\vspace{-2mm}
\end{table}

\begin{table}[!t]
    \centering
    \small
    \renewcommand{\arraystretch}{0.8}
    \addtolength{\tabcolsep}{-3pt}
    \begin{tabular}{|c|c|ccc|ccc|ccc|}
    \hline
    \multirow{2}{*}{\textbf{Dataset}} & \multirow{2}{*}{\textbf{\#Epochs}} & \multicolumn{3}{c|}{\textbf{r=0.2}} & \multicolumn{3}{c|}{\textbf{r=0.5}} & \multicolumn{3}{c|}{\textbf{r=1}} \\ \cline{3-11} 
     &  & \textbf{ASR} & \textbf{V} & \textbf{U} & \textbf{ASR} & \textbf{V} & \textbf{U} & \textbf{ASR} & \textbf{V} & \textbf{U} \\ \hline

    \multirow{4}{*}{\textbf{QM9}}  
    & 0  & 100 & 97 & 100 & 100  & 97  & 100 & 100 & 100 & 100 \\  
    & 10  & 100 & 97 & 100 & 99  & 97  & 100 & 100 & 100 & 100 \\  
    & 20  & 99  & 98 & 100 & 99 & 98  & 100 & 100 & 100 & 100 \\  
    & 50  & 98  & 98 & 100 & 99 & 98  & 100 & 99  & 100 & 100 \\  
    & 100 & 98  & 99 & 100 & 99  & 100 & 100 & 99  & 100 & 100 \\  
    \hline  

    \multirow{4}{*}{\textbf{MOSES}}  
    & 0 & 90 & 84 & 100 & 87 & 83  &100   &86   & 85  &  100 \\
    & 10  & 90  & 84 & 100 & 87  & 84  & 100 & 86  & 86  & 100 \\  
    & 20  & 90  & 85 & 100 & 86  & 83  & 100 & 85  & 84  & 100 \\  
    & 50  & 88  & 86 & 100 & 85  & 85  & 100 & 82  & 86  & 100 \\  
    & 100 & 82  & 85 & 100 & 82  & 85  & 100 & 82  & 82  & 100 \\  
    \hline  

    \multirow{4}{*}{\textbf{Guacamol}}  
    & 0 & 92 & 85 &100 & 85 & 86 & 100 & 85 & 86 & 100 \\
    & 10  & 92  & 84 & 100 & 85  & 86  & 100 & 85  & 85  & 100 \\  
    & 20  & 90  & 85 & 100 & 84  & 86  & 100 & 83  & 86  & 100 \\  
    & 50  & 88  & 84 & 100 & 84  & 88  & 100 & 80  & 90  & 100 \\  
    & 100 & 90  & 86 & 100 & 82  & 87  & 100 & 81  & 92  & 100 \\  
    \hline  
    \end{tabular}
   \vspace{-2mm}
    \caption{Attack results against finetuning on clean graphs with varying finetuning epochs.}
    \label{tab:finetune_clean}
   % \vspace{-2mm}
\end{table}

\begin{table}[!t]
    \centering
    \small
     \renewcommand{\arraystretch}{0.8}
    \addtolength{\tabcolsep}{-2pt}
    \begin{tabular}{|c|c|ccc|ccc|ccc|}
    \hline
     \multirow{2}{*}{\textbf{Dataset}} & \multirow{2}{*}{\textbf{Ratio}} & \multicolumn{3}{c|}{\textbf{r=0.2}} & \multicolumn{3}{c|}{\textbf{r=0.5}} & \multicolumn{3}{c|}{\textbf{r=1}} \\ \cline{3-11} 
     &  & \textbf{ASR} & \textbf{V} & \textbf{U} & \textbf{ASR} & \textbf{V} & \textbf{U} & \textbf{ASR} & \textbf{V} & \textbf{U} \\ \hline

    \multirow{4}{*}{\textbf{QM9}}  
    & 0\% & \multicolumn{1}{c}{100} & \multicolumn{1}{c}{97} & 100 & \multicolumn{1}{c}{100} & \multicolumn{1}{c}{97} & 100 & \multicolumn{1}{c}{100} & \multicolumn{1}{c}{100} & 100 \\
    & 1\%  & 99  & 97  & 100 & 99  & 97  & 100 & 100 & 99  & 100 \\  
    & 2\%  & 99 & 98  & 100 & 99  & 95  & 100 & 99  & 100 & 100 \\  
    & 5\%  & 98  & 97  & 100 & 99  & 92  & 100 & 97  & 100 & 100 \\  
    & 10\% & 98  & 99  & 100 & 99  & 94  & 100 & 98  & 99  & 100 \\  
    \hline  

    \multirow{4}{*}{\textbf{MOSES}}  
    &0\% & 90 & 84 & 100 & 87 & 83  &100   &86   & 85  &  100 \\
    & 1\%  & 89  & 83  & 100 & 86  & 82  & 100 & 86  & 86  & 100 \\  
    & 2\%  & 84  & 80  & 100 & 84  & 80  & 100 & 82  & 84  & 100 \\  
    & 5\%  & 80  & 81  & 100 & 80  & 83  & 100 & 79  & 81  & 100 \\  
    & 10\% & 77  & 82  & 100 & 75  & 81  & 100 & 77  & 84  & 100 \\  
    \hline  

    \multirow{4}{*}{\textbf{GuacaMol}}  
    & 0\% & 92 & 85 &100 & 85 & 86 & 100 & 85 & 86 & 100 \\
    & 1\%  & 91  & 85  & 100 & 85  & 87  & 100 & 86  & 85  & 100 \\  
    & 2\%  & 89  & 87  & 100 & 84  & 85  & 100 & 83  & 84  & 100 \\  
    & 5\%  & 86  & 89  & 100 & 81  & 86  & 100 & 81  & 83  & 100 \\  
    & 10\% & 81  & 84  & 100 & 78  & 87  & 100 & 79  & 84  & 100 \\  
    \hline  
    \end{tabular}    
    \vspace{-2mm}
    \caption{Attack results against finetuning on varying ratios of backdoored graphs mapping to the clean limit distribution.}
    \label{tab:finetune_backdoor}
 %   \vspace{-2mm}
\end{table}

\subsubsection{Attack Results}

\noindent {\bf Results on structural similarity:}
We quantitatively compare the average similarity between 100 clean graphs and their backdoored counterparts. 
In particular, we use two commonly-used graph similarity metrics  from \cite{wills2020metrics}: graph edit distance (GED) and normalized
Laplacian distance (NLD). A smaller distance indicates a larger similarity. 
Table \ref{tab:strucsim} shows the results. 
We observe that the distances are low, implying distinguishing the backdoored graphs from the clean ones is hard. 

\noindent {\bf Results on finetuning with clean graphs:} To simulate finetuning with clean graphs, we extend model training with extra epochs that only involves the clean training graphs. 
The attack results with 
varying number of finetuning epochs are shown in Table  \ref{tab:finetune_clean}. 
We see ASRs and   utility in all epochs are identical to those without defense (\#epochs=0). 

\vspace{+0.05in}
\noindent {\bf Results on finetuning with backdoored graphs:}  
We extend  model training with new backdoored graphs, but they are mapped to the clean limit distribution. 
The attack results with different ratios of backdoored graphs and 100 finetuning epochs are shown in Table \ref{tab:finetune_backdoor}. 
Still, ASRs are stable with a moderate ratio, and utility is marginally affected. 

The above results show that the designed graph backdoor attack is effective, stealthy, as well as persistent against finetuning based backdoor defenses.  

\begin{table}[!t]%\renewcommand{\arraystretch}{0.9}
\centering
\small
%\addtolength{\tabcolsep}{-2pt}
\begin{tabular}{|c|ccc|ccc|ccc|}
\hline
\multirow{2}{*}{\textbf{Datasets}} & \multicolumn{3}{c|}{\textbf{QM9}} & \multicolumn{3}{c|}{\textbf{MOSES}} & \multicolumn{3}{c|}{\textbf{Guacamol}} \\ \cline{2-10} 
 & \multicolumn{1}{c|}{\textbf{ASR}} & \multicolumn{1}{c|}{\textbf{V}} & \textbf{U} & \multicolumn{1}{c|}{\textbf{ASR}} & \multicolumn{1}{c|}{\textbf{V}} & \textbf{U} & \multicolumn{1}{c|}{\textbf{ASR}} & \multicolumn{1}{c|}{\textbf{V}} & \textbf{U} \\ \hline
{\bf Transfer attack} & \multicolumn{1}{c|}{100} & \multicolumn{1}{c|}{95} & 100 & \multicolumn{1}{c|}{99} & \multicolumn{1}{c|}{92} & 100 & \multicolumn{1}{c|}{99} & \multicolumn{1}{c|}{94} & 100 \\ \hline
{\bf Finetune on clean graphs} & \multicolumn{1}{c|}{100} & \multicolumn{1}{c|}{100} & 100 & \multicolumn{1}{c|}{99} & \multicolumn{1}{c|}{88} & 100 & \multicolumn{1}{c|}{98} & \multicolumn{1}{c|}{90} & 100 \\ \hline
{\bf Finetune on backdoored graphs} & \multicolumn{1}{c|}{100} & \multicolumn{1}{c|}{100} & 100 & \multicolumn{1}{c|}{98} & \multicolumn{1}{c|}{91} & 100 & \multicolumn{1}{c|}{96} & \multicolumn{1}{c|}{92} & 100 \\ \hline
\end{tabular}
\vspace{-2mm}
\caption{Transferring our attack on DisCo without and with defenses under the default setting.  
}
\label{tab:DisCo}
\vspace{-2mm}
\end{table}

\subsection{Transferability Results}
In this part, we evaluate the transferability of our attack on DiGress to attacking other DGDMs\footnote{We highlight that \emph{continuous} graph diffusion models use fundamentally different mechanisms and our attack cannot be applied to them.}. In particular, we select the latest DisCo \cite{xu2025discrete}---it also uses a Markov model to add noise and converges to marginal distributions w.r.t. node and edge types as DiGress. More details refer to \cite{xu2025discrete}.

We select some clean graphs, and inject the subgraph trigger used in DiGress (see Eqns \ref{eqn10} and \ref{eqn11}) into their intermediate noisy versions  from DisCo's forward diffusion, and associate with a backdoored limit distribution that is same as DiGress. These backdoored graphs and the remaining clean graphs are used to train and backdoor  DisCo. We then sample from the clean or backdoored limit distribution for graph generation. 
Table \ref{tab:DisCo} shows the attack results under the default setting (e.g., PR=5\%,  $r=0.5$)\footnote{Results on other settings are similar and we omit them for simplicity.}. The results demonstrate that our attack remains effective on DisCo, showing its transferability across different DGDMs.

We further defend against the  attack via the finetuning based defense on clean graphs (100 epochs), and finetuning based defense on backdoored graphs (10\% ratio). The results in Table \ref{tab:DisCo} show both the ASR and utility are stable---again indicating the proposed attack is  persistent.  
This is because DisCo and DiGress are DGDMs and they share inherent similarities, e.g., they both converge to the same limit distribution.

\section{Conclusion and Future Work}
\label{sec:conclusion}

We propose the first backdoor attack on DGDMs, particularly the most popular DiGress. 
Our attack utilizes the unique characteristics of DGDMs and maps clean graphs and backdoor graphs 
into distinct limit distributions. 
Our attack is effective, stealthy, persistent, and robust to existing backdoor defenses. 
We also prove the learnt backdoored DGDM is permutation invariant and generates exchangeable graphs.
In the future, we plan to extend our attack to graph diffusion models for generating large-scale graphs \cite{ligraphmaker}. In addition, we aim to develop provable defenses with formal guarantees, as empirical defenses can be often broken by adaptive or stronger attacks. A promising direction is adapting provable defenses originally designed for graph classification models \cite{wang2020certifying, wang2021certified, yang2024gnncert, yang2024distributed, li2025agnncert} or graph explanation models \cite{li2025provably}.

\bibliographystyle{plain}
\bibliography{main}

\appendix
\section{Proofs}
\label{app:proofs}

Below three proofs \ref{app:backdoor_graphpost}-\ref{app:limitbackdoor} derive the three properties ({\bf P1}-{\bf P3}) required in Section \ref{sec:background} for our setting.
\begin{align*}
& \textrm{\bf P1: forward distribution } q(G_B^t | G_B) \\
& \textrm{\bf P2: limit distribution } \lim_{t \rightarrow \infty} q(G_B^t) \\
& \textrm{\bf P3: reverse denoising distribution } 
q(G_B^{t-1}|G_B^t, G_B)
\end{align*}

\subsection{Deriving $q(G_B^t|G_B)$}
\label{app:backdoor_graphpost}

We derive  $q(\mE_B^{t}|\mE_B)$ for simplicity as it is identical to derive  $q(\mX_B^{t}|\mX_B)$.  
Recall 
{
\begin{align*}
& {\mE}_B^t | {\mE}_B \sim \boldsymbol{\mE^t} \odot (1-\mM_E)  + \mE_s \odot \mM_E. \\
& \mE_B^t|\mE_B^{t-1} \sim \mE^{t-1}\mQ_{E_B}^{t} \odot(1-\mM_E) +\mE_s \odot \mM_E
\end{align*}
}%

Due to the properties of Markov chain and  $q(\mE_B^t|\mE_B^{t-1})$, following existing discrete diffusion models \cite{austin2021structured}, one can marginalize out the intermediate steps and derive below:
{
\begin{align*}
\label{eqn:forwardlimi_backdoor}
& q(\mE_B^{t}|\mE_B) = \mE\bar{Q}_{E_B}^{t} \odot(1-\mM_E) +\mE_s \odot \mM_E
\end{align*}
}%

\subsection{Deriving $q(G_B^{t-1} | G_B^t, G_B)$}
\label{app:backdoor_post2}

We derive $ q(\mE_B^{t-1}|\mE_B^t, \mE_B)$ for simplicity as it is identical to derive $ q(\mX_B^{t-1}|\mX_B^t, \mX_B)$.  
{
\begin{align*}
& q(\mE_B^{t-1}|\mE_B^t, \mE_B) \\
= ~&  q(\mE_B^t | \mE_B^{t-1}, \mE_B) ~ q(\mE_B^{t-1}|\mE_B)\\
= ~&   q(\mE_B^t | \mE_B^{t-1}) ~ q(\mE_B^{t-1}|\mE_B) 
% \\ 
\propto 
% ~ & 
q(\mE_B^{t-1}|\mE_B^{t})  ~ q(\mE_B^{t-1}|\mE_B) \\
= ~& \Big(\mE^t(Q_{E_B}^{t})'\odot(1-\mM_E) +\mE_s \odot \mM_E \Big) 
% \\ & ~ \qquad 
\odot \Big(\mE \bar{Q}_{E_B}^{t-1}\odot(1-\mM_E) +\mE_s \odot \mM_E \Big) \\
= ~& \mE^t(Q_{E_B}^{t})' \odot \mE \bar{Q}_{E_B}^{t-1} \odot(1-\mM_E) +\mE_s \odot \mM_E,
\end{align*}
}% 
where the first and third equations use the Bayesian rule,  the second equation uses the Markov property, the fourth equation  uses the define of $\bar{Q}_{E_B}$ in the opposite direction,  
and the last equation we use that $(1-\mM_E) \odot  \mM_E = 0$, $(1 - \mM_E) \odot (1 - \mM_E) =  (1 - \mM_E)$, and $\mM_E \odot \mM_E =  \mM_E $.

\subsection{Deriving  Equation (\ref{eqn:backdoor_limitdist})}
\label{app:limitbackdoor}

Recall $\mQ^t_{X_B} = \alpha^t \mI + (1-\alpha^t)~\bm 1_a \bm \vm_{X_B}'$ and $ 
\mQ^t_{E_B} = \alpha^t \mI + (1-\alpha^t) ~\bm 1_b \bm \vm_{E_B}'$.
Then we want to show the limit probability of jumping from any state to a state $j$ is proportional to the marginal probability of category $j$. Formally, 
$$ \lim_{T \to \infty} (\bar \mQ^T_{X_B}, \bar \mQ^T_{E_B}) {\bf e}_i = (\vm_{X_B}, \vm_{E_B}), \quad \forall i.$$

We ignore the subscript $a,b$, $X_B$, and $E_B$ for description simplicity. 
First, we show the square of the row-column product $(\bm 1 \bm \vm')^2 = \bm 1 \bm \vm' \bm 1 \bm \vm' = \bm 1 \bm \vm'$, where the column-row product $\vm' \bm 1 = 1$, as $\vm$ is a provability vector.  

Next, we prove via induction that: $\bar \mQ^t = \bar \alpha^t \mI + \bar \beta^t \bm 1 \bm \vm'$ for $\bar\alpha^t = \prod_{\tau=1}^t \alpha^\tau$ and $\bar \beta^t = 1 - \bar\alpha^t$. 

\vspace{+0.05in}
\noindent {\bf Step I: Base case.} When $t=1$, we have $\bar \mQ^1 = \mQ^1 = \alpha^1 \mI + \beta^1 \bm 1 \bm \vm' = \bar \alpha^1 \mI + \bar \beta^1 \bm 1 \bm \vm'$, satisfying the base case. 

\vspace{+0.05in}
\noindent {\bf Step II: Inductive Hypothesis.} 
Assume $t=k$,  $\bar \mQ^k = \bar \alpha^k \mI + \bar \beta^k \bm 1 \bm \vm'$ for $\bar\alpha^k = \prod_{\tau=1}^k \alpha^\tau$ and $\bar \beta^k = 1 - \bar\alpha^k$.

\vspace{+0.05in}
\noindent {\bf Step III: Inductive Step.} We prove that $\bar \mQ^{k+1} = \bar \alpha^{k+1} \mI + \bar \beta^{k+1} \bm 1 \bm \vm'$ for $\bar\alpha^{k+1} = \prod_{\tau=1}^{k+1} \alpha^\tau$ and $\bar \beta^{k+1} = 1 - \bar\alpha^{k+1}$.
The detail is shown below: 
\begin{align*}
& \bar \mQ^{k+1} = \bar \mQ^{k} \mQ^{k+1} \\
& = (\bar \alpha^k \mI + \bar \beta^k \bm 1 \bm \vm') ~ (\alpha^{k+1} \mI + \beta^{k+1} ~\bm 1 \bm \vm') \\
& = \bar \alpha^k \alpha^{k+1} \mI + (\bar \alpha^k  \beta^{k+1} + \bar \beta^k  \alpha^{k+1}) \bm 1 \bm \vm' + \bar \beta^k  \beta^{k+1} \bm 1 \bm \vm' \bm 1 \bm \vm' \\
& = \bar \alpha^{k+1} \mI + \big(\bar \alpha^k (1-\alpha^{k+1}) + (1-\bar \alpha^k) \alpha^{k+1} 
% \\ & \qquad \quad 
+ (1-\bar \alpha^k) (1-\alpha^{k+1}) \big)  \bm 1 \bm \vm' %\\& =  \bar \alpha^{k+1} \mI  
+ (1-\bar \alpha^{k+1}) \bm 1 \bm \vm' 
% \\ & =  \bar \alpha^{k+1} \mI  + \bar \beta^{k+1} \bm 1 \bm \vm'
\end{align*}

As $T \rightarrow \infty$, $\bar \alpha^{T} \rightarrow 0$. Hence $\lim_{T \to \infty} \bar \mQ^T = 1 \bm \vm'$, where all rows are $\bm \vm'$.  
Thus, for any base vector ${\bf e}_i$, $ \lim_{T \to \infty} \bar \mQ^T {\bf e}_i = \vm$. 
% \end{proof}

\subsection{Proof of Theorem \ref{thm:perinvariant}}
\label{app:thm:perinvariant}

We need to prove that: i) the neural network building blocks are permutation invariant; and ii) the objection function (i.e., the training loss) is also permutation invariant. 

\vspace{+0.05in}
\noindent {\bf Proving i):} DiGress uses three types of blocks: 

1) spectral and structural features (e.g., eigenvalues of the graph Laplacian and cycles in the graph) to improve the network expressivity); 

2) graph transformer layers (consisting of graph self-attention and fully connected multiple-layer perception);  

3) layer-normalization. 

 DiGress proves that these blocks are permutation invariant.  Backdoored DiGress uses the same network architecture as DiGress and hence is also permutation invariant. 

\vspace{+0.05in}
\noindent {\bf Proving ii):} Backdoored DiGress optimizes the  cross-entropy loss   on clean graphs $\{G=(\mX, \mE)\}$ and backdoored graphs $\{G^B = (\mX_B, \mE_B)\}$ to learn the model $\theta_B$: 
{
\begin{align*}
& \min_{\theta_B} 
\mathcal{L}(\{G\}, \{G_B\}; \theta_B) \\ 
& =  
\sum_{\{G=(\mX, \mE)\}} \big( l_{CE}(\mX, \hat \vp^X) + l_{CE}(\mE, \hat \vp^E) \big) %\nonumber \\
% & \quad 
+ \sum_{\{G^B = (\mX_B, \mE_B)\}} \big( l_{CE}(\mX_B, \hat \vp^{X_B}) +  l_{CE}(\mE_B, \hat \vp^{E_B}) \big) 
\end{align*}
}

For a clean graph $G$ or a backdoored graph $G_B$, its associated cross-entropy loss can be decomposed to be the sum of the loss of individual nodes and edges. For instance,   $l_{CE}(\mX, \hat \vp^X) = \sum_{1 \leq i \leq n} l_{CE}(x_i, \hat p^X_i)$,  $l_{CE}(\mE_B, \hat \vp^{E_B}) = \sum_{1 \leq i, j \leq n} l_{CE}(e_{B,ij}, \hat p^E_{B,ij})$. 

Hence, the total loss on the clean and backdoored graphs does not change with any node permutation $\pi$. That is, 
$$\mathcal{L}(\{\pi(G)\}, \{\pi(G_B)\}; \theta_B) = 
\mathcal{L}(\{G\}, \{G_B\}; \theta_B).$$

\subsection{Proof of Theorem \ref{thm:exchangeability}}
\label{app:thm:exchangeability}

The proof builds on the result in \cite{xu2022geodiff} shown below: 
\begin{proposition}[\cite{xu2022geodiff}]
\label{prop:exchangeability}
\vspace{-2mm}
Let $\mathcal{C}$ be a particle. If,

i) a distribution \( p(\mathcal{C}^T) \) is invariant under the transformation $T_g$ of a group element \( g\), i.e.,  \( p(\mathcal{C}^T) =  p(T_g(\mathcal{C}^T)) \);

ii) the Markov transitions \( p(\mathcal{C}^{t-1} \mid \mathcal{C}^t) \) are equivariant, i.e., \( p (\mathcal{C}^{t-1} \mid \mathcal{C}^t) =  p(T_g(\mathcal{C}^{t-1}) \mid T_g(\mathcal{C}^t)) \),

then the density \( p_\theta(\mathcal{C}^0) \) is also invariant under the transformation \( T_g \), i.e., \( p_\theta (\mathcal{C}^0) =  p_\theta(T_g(\mathcal{C}^0))\). 
\vspace{-2mm}
\end{proposition}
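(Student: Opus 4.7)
\textbf{Proof proposal for Proposition \ref{prop:exchangeability}.}

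The plan is to express $p_\theta(\mathcal{C}^0)$ as a marginalization of the joint reverse-process density over all intermediate states and then use a change of variables induced by $T_g$ to convert the expression for $p_\theta(T_g(\mathcal{C}^0))$ back into the expression for $p_\theta(\mathcal{C}^0)$. Concretely, write
\begin{equation*}
p_\theta(\mathcal{C}^0) \;=\; \int p(\mathcal{C}^T)\,\prod_{t=1}^{T} p(\mathcal{C}^{t-1}\mid \mathcal{C}^t)\, d\mathcal{C}^{1:T},
\end{equation*}
which is simply the definition of the generative model obtained by running the reverse Markov chain from the prior $p(\mathcal{C}^T)$. In the discrete-state setting (as in DiGress) the integral should be replaced by a sum over the finite set of trajectories; the argument is the same either way, with integration/summation both being invariant under the $T_g$-reindexing used below.

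The next step is to apply exactly the same marginalization to $p_\theta(T_g(\mathcal{C}^0))$ and then substitute $\widetilde{\mathcal{C}}^t := T_g^{-1}(\mathcal{C}^t)$ for each $t \geq 1$. Because $T_g$ is a group action (in our case, a node permutation) the map $\mathcal{C}^t \mapsto T_g^{-1}(\mathcal{C}^t)$ is a bijection with unit Jacobian in the continuous case and a mere relabeling in the discrete case, so $d\mathcal{C}^{1:T} = d\widetilde{\mathcal{C}}^{1:T}$. After the substitution the integrand contains the factor $p(T_g(\widetilde{\mathcal{C}}^T))$, which equals $p(\widetilde{\mathcal{C}}^T)$ by hypothesis (i), and the transition factors $p(T_g(\mathcal{C}^0)\mid T_g(\widetilde{\mathcal{C}}^1))$ and $p(T_g(\widetilde{\mathcal{C}}^{t-1})\mid T_g(\widetilde{\mathcal{C}}^t))$, which by hypothesis (ii) equal $p(\mathcal{C}^0\mid \widetilde{\mathcal{C}}^1)$ and $p(\widetilde{\mathcal{C}}^{t-1}\mid \widetilde{\mathcal{C}}^t)$, respectively. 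After these replacements the integrand is exactly the one defining $p_\theta(\mathcal{C}^0)$, so the two marginals agree, proving $p_\theta(\mathcal{C}^0) = p_\theta(T_g(\mathcal{C}^0))$.

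The main technical delicacy I expect is the change-of-variables step. For a permutation $T_g$ acting on a finite discrete state space (the relevant case here) the argument is completely transparent: the substitution amounts to relabeling summation indices and carries no Jacobian correction. In the continuous case one must verify that $T_g$ is a measure-preserving diffeomorphism so that $|\det \partial T_g/\partial \mathcal{C}| = 1$; this is automatic for orthogonal actions such as node permutations but would be worth stating explicitly. A secondary subtlety is making sure that hypothesis (ii) is applied to every transition in the product, including the final one $p(T_g(\mathcal{C}^0)\mid T_g(\widetilde{\mathcal{C}}^1))$ where $\mathcal{C}^0$ is held fixed rather than being integrated out; this follows from equivariance applied pointwise. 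Once these two points are handled cleanly, chaining them through all $T$ reverse steps yields the desired invariance.
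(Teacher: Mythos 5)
Your proof is correct, and it is the standard trajectory-marginalization argument for this result. Note that the paper itself does not prove this proposition at all---it is imported verbatim from the cited work \cite{xu2022geodiff} and used as a black box in the proof of Theorem \ref{thm:exchangeability}---so there is no in-paper proof to compare against; your argument (write $p_\theta(\mathcal{C}^0)$ as an integral of the prior times the product of reverse transitions, reindex the integration variables by $T_g^{-1}$, and apply invariance of the prior and equivariance of each transition factor) is exactly the one given in the original references, and your remarks on the unit Jacobian for permutation actions and on applying equivariance to the final transition with $\mathcal{C}^0$ held fixed are the right points to be careful about.
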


We apply Proposition \ref{prop:exchangeability} to our setting with the permutation transformation:

First, the clean or backdoored limit distribution $p(G^T)$ or $p(G_B^T)$ is the product of independent  and  identical distribution on each node and edge. It is thus permutation invariant and satisfies condition i).
    
Second, the denoising network $p_{\theta_B}$ in backdoored DiGress is permutation equivariant (Theorem~\ref{thm:perinvariant}). Moreover, the network prediction $\hat p_{\theta_B}  (G)\to p_{\theta_B} (G^{t-1} | G^t) = \sum_G q(G^{t-1}, G | G^t) \hat p_{\theta_B}(G)$ defining the transition probabilities is equivariant to joint permutations of $\hat p_{\theta_B}(G)$ and $G^t$, and so to the  joint permutations of $\hat p_{\theta_B}(G_B)$ and $G_B^t$. Thus,  condition ii) is also satisfied. 

Together, the backdoored DiGress generated the graph  with node features $\mX$ and edges $\mE$ that satisfy $P(\mX, \mE) = P(\pi(\mX), \pi(\mE))$ for any permutation $\pi$, meaning the generated graphs are exchangeable.

\begin{figure*}[!t]
%\vspace{-2mm}
    \centering
    \begin{minipage}{0.8\linewidth}
        \centering
        \includegraphics[width=0.8\linewidth]{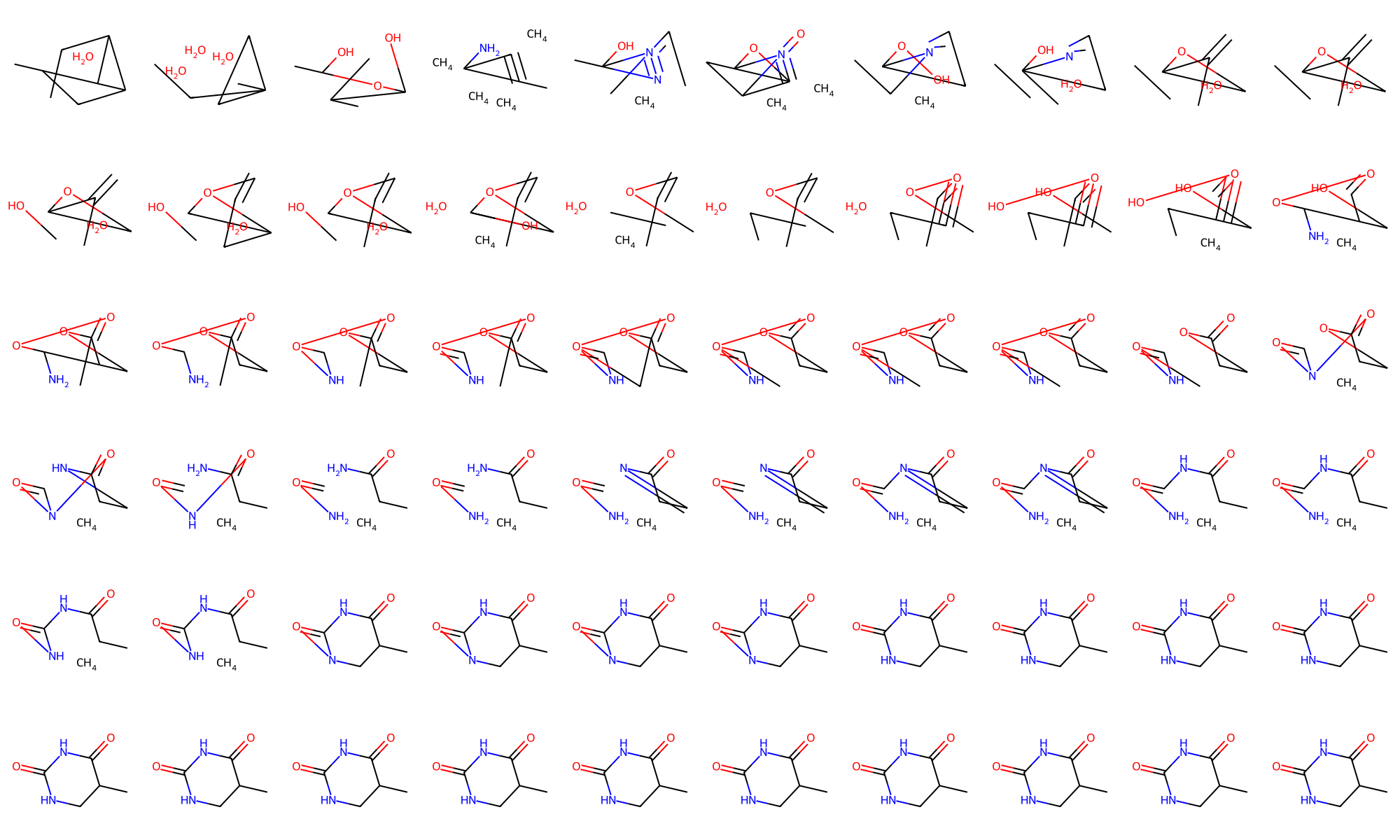}
        \vspace{-2mm}
        \subcaption{QM9-clean}
    \end{minipage}%
    
    \begin{minipage}{0.8\linewidth}
        \centering
        \includegraphics[width=0.8\linewidth]{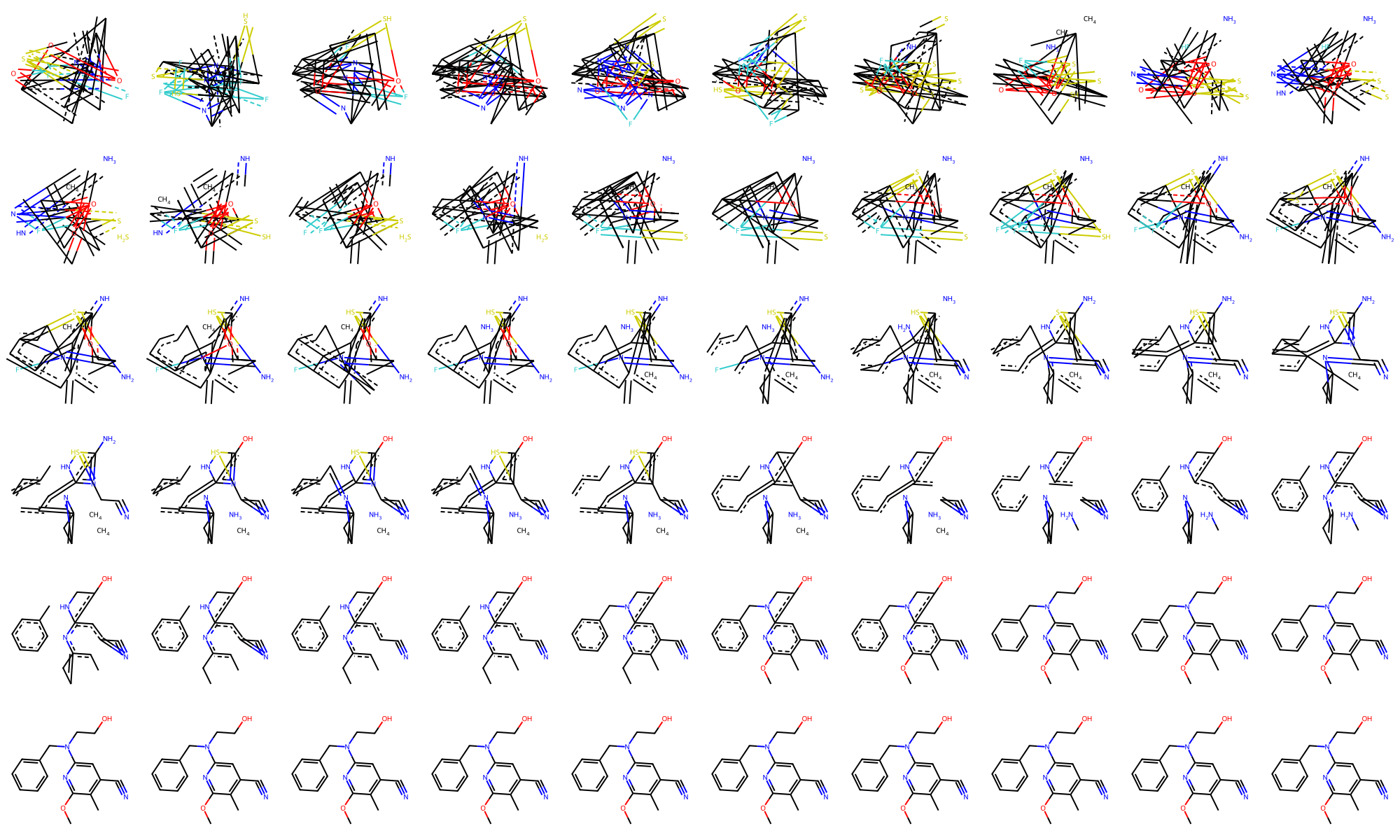}
        \vspace{-2mm}
        \subcaption{MOSE-clean}
    \end{minipage}%
    
    \begin{minipage}{0.8\linewidth}
        \centering
        \includegraphics[width=0.8\linewidth]{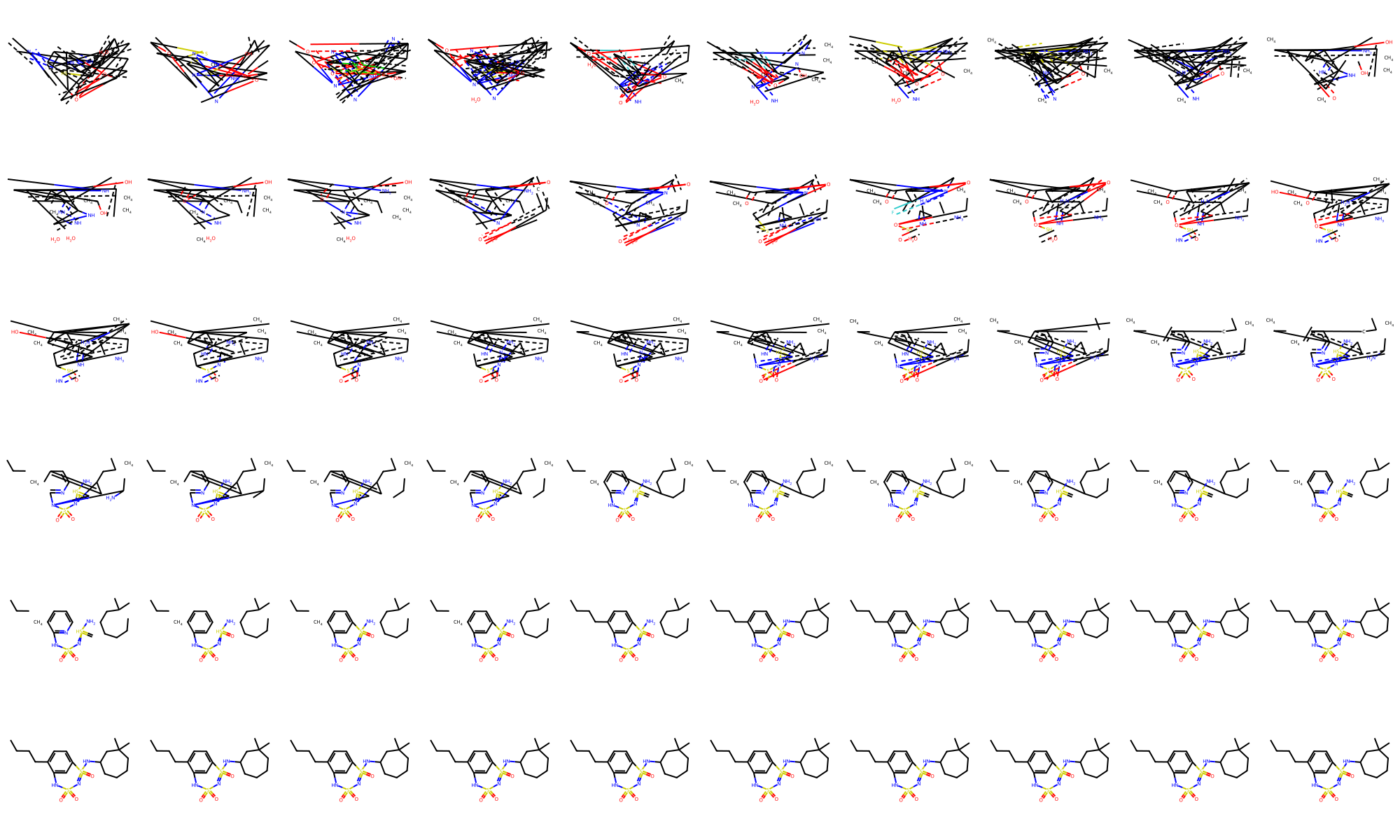}
        \vspace{-2mm}
        \subcaption{Guacamol-clean}
    \end{minipage}    
    %\vspace{-4mm}
    \caption{Example clean graphs generation.}
    \label{fig:dynamics_clean}
    %\vspace{-2mm}
\end{figure*}

\begin{figure*}[!t]
%\vspace{-2mm}
    \centering
    \begin{minipage}{0.8\linewidth}
        \centering
        \includegraphics[width=0.8\linewidth]{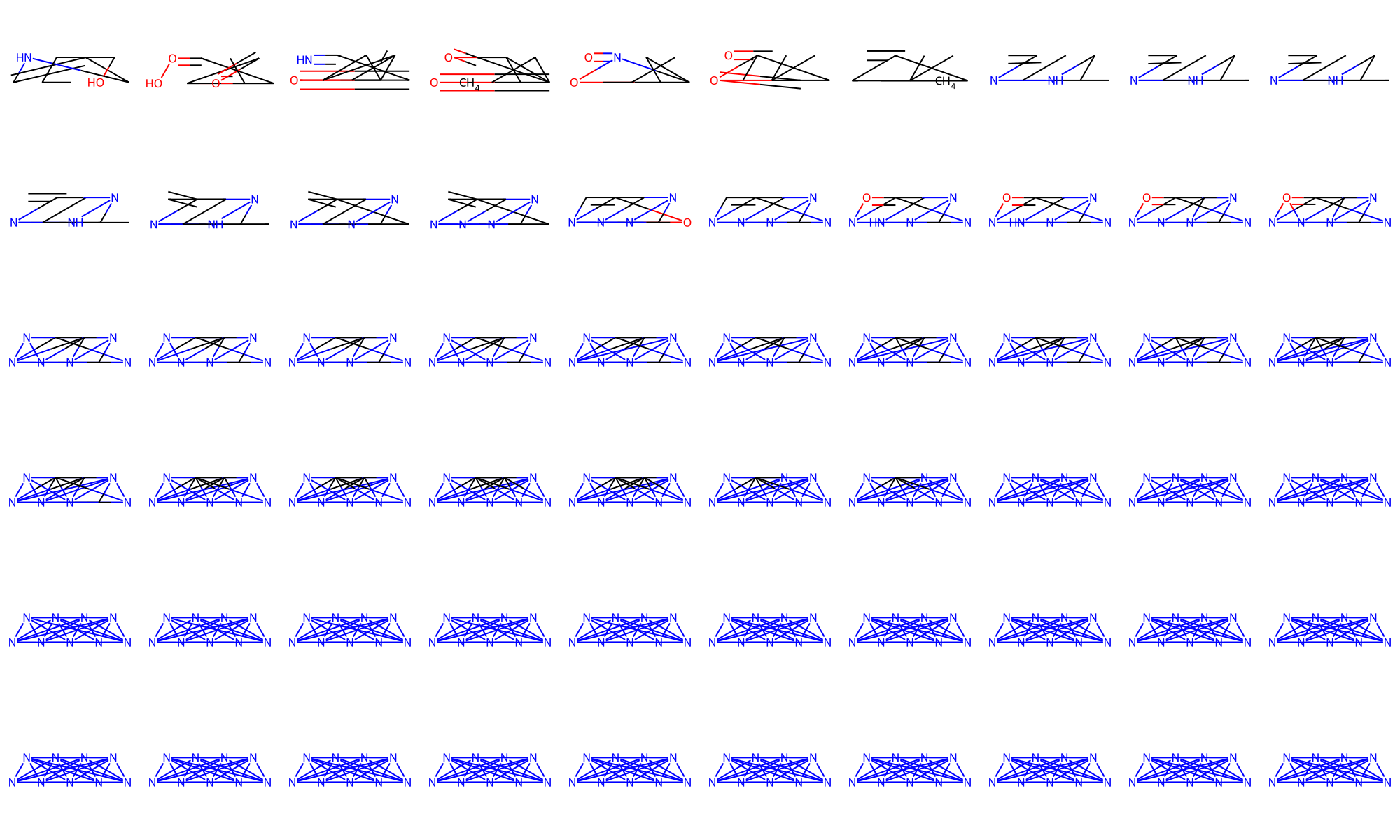}
        \vspace{-2mm}
        \subcaption{QM9-backdoored}
    \end{minipage}%
    
    \begin{minipage}{0.8\linewidth}
        \centering
        \includegraphics[width=0.8\linewidth]{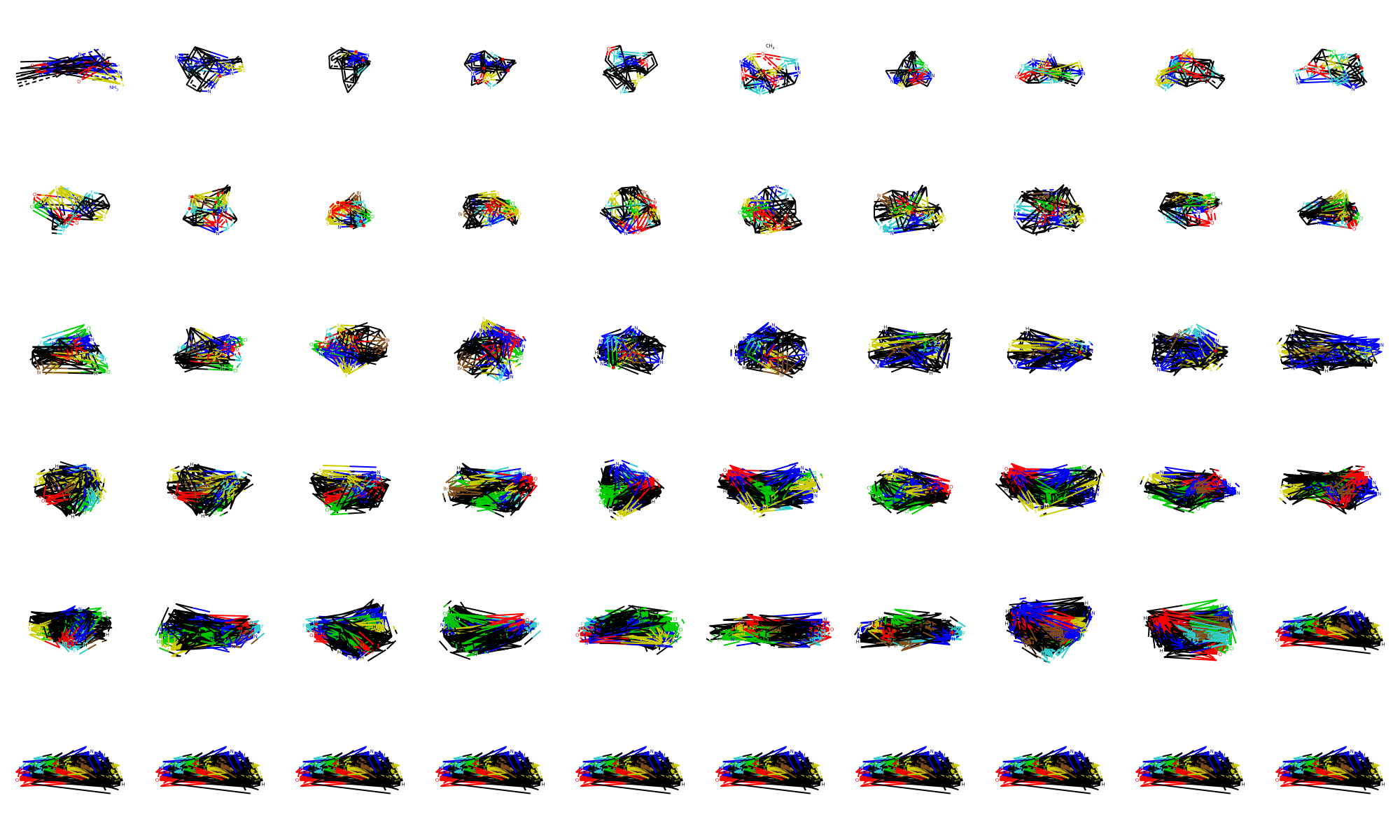}
        \vspace{-2mm}
        \subcaption{Moses-backdoored}
    \end{minipage}%
    
    \begin{minipage}{0.8\linewidth}
        \centering
        \includegraphics[width=0.8\linewidth]{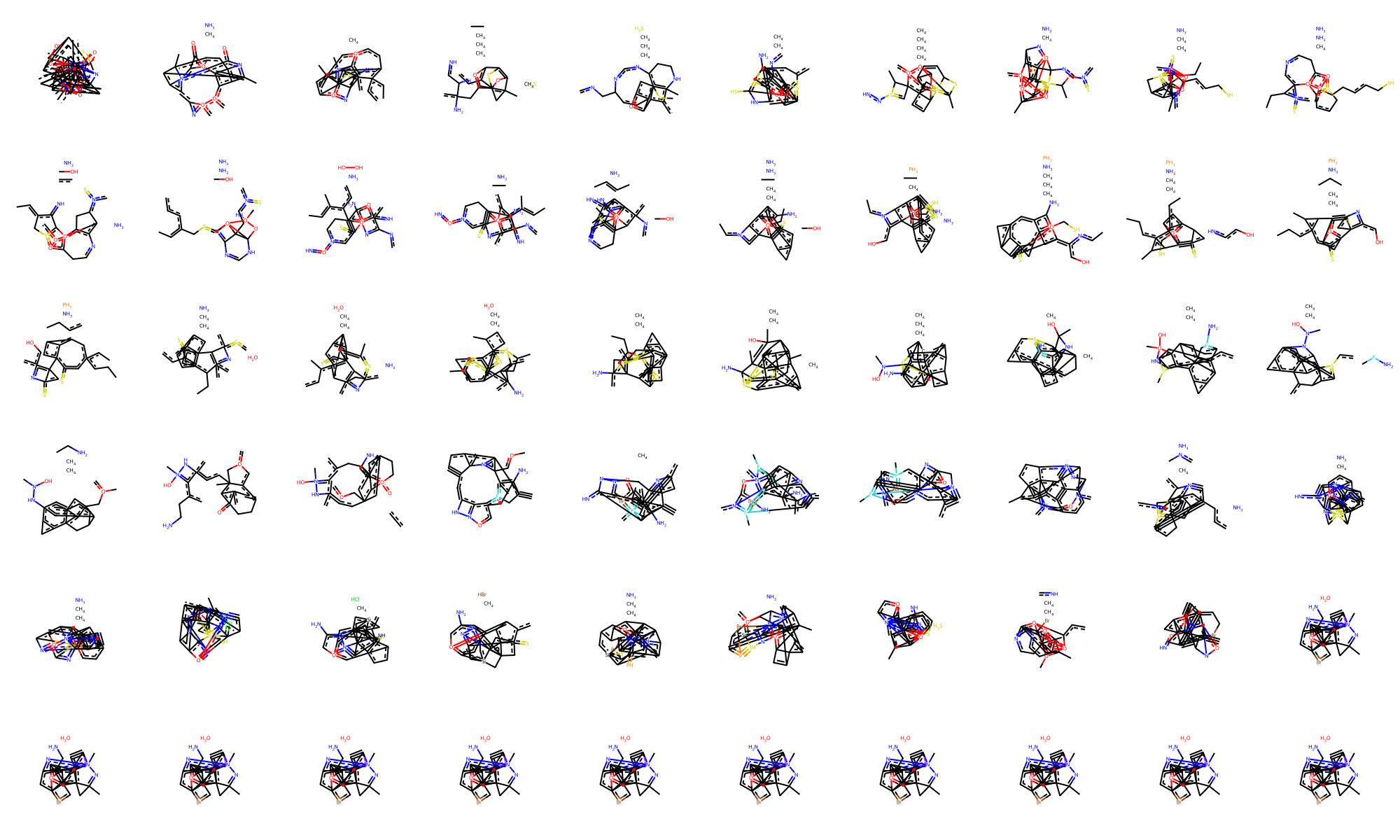}
        \vspace{-2mm}
        \subcaption{Guacamol-backdoored}
    \end{minipage}
    %\vspace{-2mm}
    \caption{Example backdoored graphs generation.} 
    \label{fig:dynamics}
    %\vspace{-2mm}
\end{figure*}

\end{document}